\providecommand{\U}[1]{\protect\rule{.1in}{.1in}}
\newtheorem{theorem}{Theorem}
\newtheorem{definition}[theorem]{Definition}
\newenvironment{proof}[1][Proof]{\textbf{#1.} }{\ \rule{0.5em}{0.5em}}
\begin{document}

\title{Error rates of Belavkin weighted quantum measurements and a converse to\\Holevo's asymptotic optimality Theorem}
\author{Jon Tyson\thanks{jonetyson@X.Y.Z, where X=post, Y=Harvard, Z=edu}\\Jefferson Laboratory, Harvard University, Cambridge MA 02138, USA}
\date{November 12, 2008\\
Journal Ref: \href{http://link.aps.org/doi/10.1103/PhysRevA.79.032343}{Phys.
Rev. A \textbf{79}, 032343 (2009)}.}
\maketitle

\begin{abstract}
\noindent We compare several instances of pure-state Belavkin weighted
square-root measurements from the standpoint of minimum-error discrimination
of quantum states. The quadratically weighted measurement is proven superior
to the so-called \textquotedblleft pretty good measurement\textquotedblright%
\ (PGM) in a number of respects:

\begin{enumerate}
\item Holevo's quadratic weighting unconditionally outperforms the PGM\ in the
case of two-state ensembles, with equality only in trivial cases.

\item A converse of a theorem of Holevo is proven, showing that a weighted
measurement is asymptotically-optimal only if it is quadratically weighted.

\end{enumerate}

\noindent Counter-examples for three states are constructed. The cube-weighted
measurement of Ballester\textit{, }Wehner, and Winter is also considered.
Sufficient optimality conditions for various weights are compared.

\end{abstract}

\pagebreak

\section{Introduction}

\subsection{Optimal measurements}

Consider an ensemble $\mathcal{E}_{m}^{\text{mixed}}$ of mixed quantum states
$\rho_{k}$ with \textit{a priori} probabilities $p_{k}$, $k=1,..,m,$ and unit
normalizations $\operatorname*{Tr}\rho_{k}=1$. Of fundamental importance is

\begin{quotation}
\noindent\textbf{The minimum-error quantum distinguishability problem:} If an
unknown state $\rho_{k}$ is blindly drawn from the ensemble, what is the
chance that the corresponding value of $k$ may be correctly identified by
performing an optimally chosen quantum measurement?
\end{quotation}

\noindent The modern approach to this problem is to consider measurements
defined by

\begin{definition}
A \textbf{positive-operator valued measure (POVM) }$\left\{  M_{k}\right\}  $
(see, for example, p. 74 of \cite{Helstrom Quantum Detection and Estimation
Theory}) is a collection of positive semidefinite operators on a Hilbert space
$\mathcal{H}$ such that $\sum M_{k}=%
\openone
$. The probability that the value $i$ is detected when the POVM is applied to
the state $\rho_{j}$ is given by $p_{i|j}=\operatorname*{Tr}M_{i}\rho_{j}$. In
particular, the \textbf{success rate} for the POVM to distinguish the ensemble
$\mathcal{E}_{m}^{\text{mixed}}$ is given by%
\begin{equation}
P_{\text{succ}}=\sum_{k=1}^{m}p_{k}\operatorname*{Tr}\left(  \rho_{k}%
M_{k}\right)  \text{.}%
\end{equation}

\end{definition}

Minimum-error quantum measurement was first considered in the 1960s in the
design of high performance optical detectors \cite{Helstrom Quantum Detection
and Estimation Theory}. More recently, this problem has been fundamentally
important in quantum Shannon theory (for example \cite{pure state HSW theorem,
mixed state HSW theorem,Holevo mixed state HSW theorem}) and in construction
of quantum algorithms for the \textit{Hidden Subgroup Problem}.\cite{Ip Shor's
algorithm is optimal, Bacon, Childs from optimal to efficient algo, optimal
alg for hidden shift, moore and russels distinguishing, Bacon new hidden
subgroup}. Various necessary and sufficient conditions for optimal
measurements have been derived \cite{Yuen Ken Lax Optimum testing of
multiple,Holevo optimal measurement conditions 1,Holevo remarks on optimal
measurements, Belavkin Optimal multiple quantum statistical hypothesis
testing, Belavkin and Vancjan, Barnett and Croke On the conditions for
discrimination between quantum states with minimum error} (see also
\cite{eldar short and sweet optimal measurements}). A number of relatively
recent works give interesting general upper and/or lower bounds on the quantum
distinguishability problem.
\cite
{pure state HSW theorem,Hayden Leung Multiparty Hiding,Barnum Knill UhOh,Montanaro on the distinguishability of random quantum states,
Daowen Qui Minimum-error discrimination between mixed quantum states,
Montanaro a lower bound on the probability of error in quantum state discrimination,
Qiu and Li Bounds on the minimum-error discrimination between mixed quantum states,QS2,Tyson Simplified and robust conditions for optimum testing of multiple hypotheses}
Explicitly solving the general optimal measurement problem is most likely
impossible, but in specific numerical cases one may compute the optimal
measurement by numerical iteration \cite{Tyson Simplified and robust
conditions for optimum testing of multiple hypotheses, Helstrom Bayes cost
reduction, Jezek Rehacek and Fiurasek Finding optimal strategies for minimum
error quantum state discrimination, Hradil et al Maximum Likelihood methods in
quantum mechanics} or by numerical solution of the associated semidefinite
program \cite{eldar short and sweet optimal measurements}.

The optimal measurement problem has been generalized to wave discrimination
\cite{Belavkin Book} and to optimal reversals of quantum channels, in the
sense of average entanglement fidelity \cite{Barnum Knill UhOh, Fletcher
Thesis Channel Adapted quantum error correction,Fletcher Shor Win Optimum
quantum error recovery using semidefinite programming,Fletcher Shor Win
Channel-Adapted quantum error correction for the amplitude dampin
channel,Fletcher Shor Win Structured Near-Optimal Channel-Adapted Quantum
Error Correction, Taghavi Channel-Optimized quantum error correction}. More
recently, the success-rate of optimal measurements has been expressed in terms
of the conditional min-entropy of corresponding classical-quantum states. (See
Theorem 1 of \cite{Konig Renner Schaffner Operational meaning of min and max
entropy}.)

\subsection{ Belavkin's Theorems}

In the rest of this paper we shall restrict consideration to the ensemble
\begin{equation}
\mathcal{E}_{m}=\left\{  \left(  \psi_{k},p_{k}\right)  \right\}  _{k=1,...,m}
\label{The pure state ensemble}%
\end{equation}
of pure quantum states $\psi_{k}\in\mathcal{H}$, and consider POVMs given by

\begin{definition}
The \textbf{Belavkin Weighted Square Root Measurement} (BWSRM) \cite{Belavkin
Optimal multiple quantum statistical hypothesis testing, Belavkin optimal
distinction of non-orthogonal quantum signals} (also known as a
\textbf{Weighted Least-Squares Measurement \cite{EldarSquareRootMeasurement})}
with weights $W_{k}\geq0$ is the POVM\footnote{The negative fractional power
is well-defined on the restriction to the span of the $W_{k}\psi_{k}$. More
properly, one may define $A^{-1/2}=%
{\displaystyle\sum}
\lambda_{k}^{-1/2}\left\vert \phi_{k}\right\rangle \left\langle \phi
_{k}\right\vert $, where $A=%
{\displaystyle\sum}
\lambda_{k}\left\vert \phi_{k}\right\rangle \left\langle \phi_{k}\right\vert $
is a spectral-decomposition with $\lambda_{k}=0$ terms omitted.}%
\begin{equation}
M_{k}=\left(  \sum_{\ell}W_{\ell}\left\vert \psi_{\ell}\right\rangle
\left\langle \psi_{\ell}\right\vert \right)  ^{-1/2}W_{k}\left\vert \psi
_{k}\right\rangle \left\langle \psi_{k}\right\vert \left(  \sum_{\ell}W_{\ell
}\left\vert \psi_{\ell}\right\rangle \left\langle \psi_{\ell}\right\vert
\right)  ^{-1/2}\text{.}
\label{eq formula for pure state weighted measurement M_k}%
\end{equation}
on the linear span of the $W_{k}\left\vert \psi_{k}\right\rangle $.
\end{definition}

The importance of BWSRMs in minimum-error discrimination problems was shown by
the following

\begin{theorem}
[Belavkin 1975 \cite{Belavkin Optimal multiple quantum statistical hypothesis
testing, Belavkin optimal distinction of non-orthogonal quantum signals}%
]\label{theorem belavkin pure state optimal weights}A POVM $\left\{
M_{k}\right\}  $ on $\operatorname*{Span}\left(  \mathcal{E}_{m}\right)
\equiv\operatorname*{Span}\left(  \left\{  p_{k}\psi_{k}\right\}  \right)  $
is optimal if and only if it may be expressed as a BWSRM with weights $W_{k}$
such that the operator%
\begin{equation}
\Lambda=\left(
{\displaystyle\sum_{\ell=1}^{m}}
W_{\ell}\left\vert \psi_{\ell}\right\rangle \left\langle \psi_{\ell
}\right\vert \right)  ^{1/2}
\label{In Belavkins theorem this better be invertible}%
\end{equation}
is invertible on $\operatorname*{Span}\left(  \mathcal{E}_{m}\right)  $ and%
\begin{equation}
p_{k}\left\langle \psi_{k}\right\vert \Lambda^{-1}\left\vert \psi
_{k}\right\rangle \leq1\text{,\label{Belavkin pure state optimality condition}%
}%
\end{equation}
with equality when $W_{k}>0$.\footnote{Mochon rediscovered that every optimal
pure-state measurement may be expressed as a BWSRM. \cite{Mochon PGM}}
\end{theorem}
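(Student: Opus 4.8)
The plan is to deduce the theorem from the classical Holevo--Yuen--Kennedy--Lax (HYKL) necessary and sufficient optimality conditions, which I will take as known: working throughout on the finite-dimensional space $\operatorname{Span}(\mathcal{E}_{m})$ and assuming (as usual) every $p_{k}>0$ so that $\operatorname{Span}(\{p_{k}\psi_{k}\})=\operatorname{Span}(\{\psi_{k}\})$, a POVM $\{M_{k}\}$ is optimal if and only if there is a Hermitian operator $\Upsilon$ with $\Upsilon\succeq p_{k}|\psi_{k}\rangle\langle\psi_{k}|$ for every $k$ and $(\Upsilon-p_{k}|\psi_{k}\rangle\langle\psi_{k}|)M_{k}=0$ for every $k$ (the complementary-slackness relation); in that case summing against $\sum_{k}M_{k}=\openone$ forces $\Upsilon=\sum_{k}p_{k}|\psi_{k}\rangle\langle\psi_{k}|M_{k}$. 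The whole proof consists of recognizing that the required dual certificate $\Upsilon$ is exactly the operator $\Lambda$ of \eqref{In Belavkins theorem this better be invertible}.

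For the ``if'' direction I would start from a BWSRM with the stated weights, so $M_{k}=\Lambda^{-1}(W_{k}|\psi_{k}\rangle\langle\psi_{k}|)\Lambda^{-1}$ with $\Lambda$ invertible on the span, and check that $\Lambda$ itself satisfies the HYKL conditions. Complementary slackness: $\Lambda M_{k}=W_{k}|\psi_{k}\rangle\langle\psi_{k}|\Lambda^{-1}$ while $p_{k}|\psi_{k}\rangle\langle\psi_{k}|M_{k}=p_{k}W_{k}\langle\psi_{k}|\Lambda^{-1}|\psi_{k}\rangle\,|\psi_{k}\rangle\langle\psi_{k}|\Lambda^{-1}$, and these coincide precisely because $W_{k}\bigl(1-p_{k}\langle\psi_{k}|\Lambda^{-1}|\psi_{k}\rangle\bigr)=0$, which is the ``equality when $W_{k}>0$'' hypothesis (both sides vanishing when $W_{k}=0$). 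Summing the slackness identities then gives $\Lambda=\sum_{k}p_{k}|\psi_{k}\rangle\langle\psi_{k}|M_{k}$, manifestly Hermitian. For the operator inequality, conjugating by $\Lambda^{-1/2}$ reduces $\Lambda-p_{k}|\psi_{k}\rangle\langle\psi_{k}|\succeq0$ to $\openone-p_{k}|\Lambda^{-1/2}\psi_{k}\rangle\langle\Lambda^{-1/2}\psi_{k}|\succeq0$, and the rank-one operator subtracted there has top eigenvalue $p_{k}\langle\psi_{k}|\Lambda^{-1}|\psi_{k}\rangle\leq1$. Hence $\{M_{k}\}$ is optimal.

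For the ``only if'' direction, let $\{M_{k}\}$ be optimal and set $\Lambda:=\Upsilon=\sum_{k}p_{k}|\psi_{k}\rangle\langle\psi_{k}|M_{k}$. First I would show $\Lambda$ is invertible on the span: $\Lambda\phi=0$ gives $0=\langle\phi|\Lambda|\phi\rangle\geq p_{k}|\langle\psi_{k}|\phi\rangle|^{2}$ for all $k$, hence $\langle\phi|\bigl(\sum_{k}p_{k}|\psi_{k}\rangle\langle\psi_{k}|\bigr)|\phi\rangle=0$ and $\phi=0$ since $\sum_{k}p_{k}|\psi_{k}\rangle\langle\psi_{k}|$ has range equal to the span. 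Next, complementary slackness says $\Lambda M_{k}=p_{k}|\psi_{k}\rangle\langle\psi_{k}|M_{k}$, whose right-hand side has range inside $\mathbb{C}|\psi_{k}\rangle$, so $M_{k}=\Lambda^{-1}\Lambda M_{k}=p_{k}\Lambda^{-1}|\psi_{k}\rangle(\langle\psi_{k}|M_{k})$ has rank $\leq1$; together with $M_{k}=M_{k}^{\dagger}\succeq0$ this forces $M_{k}=W_{k}\,\Lambda^{-1}|\psi_{k}\rangle\langle\psi_{k}|\Lambda^{-1}$ for some scalar $W_{k}\geq0$. Substituting into $\sum_{k}M_{k}=\openone$ and conjugating by $\Lambda$ yields $\sum_{k}W_{k}|\psi_{k}\rangle\langle\psi_{k}|=\Lambda^{2}$, so $\Lambda=\bigl(\sum_{k}W_{k}|\psi_{k}\rangle\langle\psi_{k}|\bigr)^{1/2}$ and $\{M_{k}\}$ is exactly the BWSRM with these weights. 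The inequality $\Lambda\succeq p_{k}|\psi_{k}\rangle\langle\psi_{k}|$ becomes $p_{k}\langle\psi_{k}|\Lambda^{-1}|\psi_{k}\rangle\leq1$ by the same $\Lambda^{-1/2}$-conjugation, and feeding $M_{k}=W_{k}\Lambda^{-1}|\psi_{k}\rangle\langle\psi_{k}|\Lambda^{-1}$ back into the slackness relation forces $p_{k}\langle\psi_{k}|\Lambda^{-1}|\psi_{k}\rangle=1$ whenever $W_{k}>0$.

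The step I expect to need the most care is the extraction of the scalar $W_{k}$ in the ``only if'' direction: one must note $\chi_{k}:=p_{k}\Lambda^{-1}\psi_{k}\neq0$ (immediate, since $\Lambda^{-1}$ is invertible and $p_{k}\psi_{k}\neq0$) and then argue that a positive semidefinite operator of the form $|\chi_{k}\rangle(\langle\psi_{k}|M_{k})$ must in fact be a nonnegative multiple of $|\chi_{k}\rangle\langle\chi_{k}|$ --- equivalently, that Hermiticity of $M_{k}$ forces the bra $\langle\psi_{k}|M_{k}$ to be proportional to $\langle\chi_{k}|$. Everything else --- the two conjugations by $\Lambda^{\pm1/2}$, the rank-one top-eigenvalue computations, the invertibility argument for $\Lambda$, and the use of $\sum_{k}M_{k}=\openone$ on the span --- is routine bookkeeping once the HYKL conditions are in hand.
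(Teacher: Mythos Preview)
The paper does not actually supply a proof of this theorem: it is stated as Belavkin's 1975 result, with citations, and then used. There is therefore no ``paper's own proof'' to compare against; the paper's only post-theorem commentary is the derivation of the formula $W_{k}^{\text{opt}}=p_{k}^{2}\langle\psi_{k}|M_{k}^{\text{opt}}|\psi_{k}\rangle$ from \eqref{Belavkin pure state optimality condition}, which is a corollary rather than a proof.

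That said, your argument is correct and is essentially the route one expects Belavkin's original proof (and Mochon's rediscovery) to take: recognize the dual certificate $\Upsilon$ of the Holevo--Yuen--Kennedy--Lax conditions as the operator $\Lambda$ of \eqref{In Belavkins theorem this better be invertible}. Both directions go through as you describe. The conjugation by $\Lambda^{-1/2}$ to convert $\Lambda\succeq p_{k}|\psi_{k}\rangle\langle\psi_{k}|$ into the scalar inequality \eqref{Belavkin pure state optimality condition} is exactly right, the invertibility argument for $\Lambda$ on the span is clean, and the step you flag as delicate---extracting the scalar $W_{k}$ from the rank-$\leq 1$ Hermitian operator $|\chi_{k}\rangle(\langle\psi_{k}|M_{k})$---works just as you outline: Hermiticity of a nonzero rank-one operator $|\chi\rangle\langle v|$ forces $v\propto\chi$, and positivity then makes the proportionality constant nonnegative. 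The only cosmetic remark is that in the ``if'' direction you don't really need to sum the slackness identities to see that $\Lambda$ is Hermitian, since $\Lambda$ is a positive square root by construction; but checking that $\Lambda$ equals $\sum_{k}p_{k}|\psi_{k}\rangle\langle\psi_{k}|M_{k}$ is still a useful sanity check that it is the correct Lagrange operator.
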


Note that a simple formula for a set of optimal weights corresponding to a
given optimal measurement%
\begin{equation}
W_{k}^{\text{opt}}=\left\langle \psi_{k}\right\vert M_{k}^{\text{opt}%
}\left\vert \psi_{k}\right\rangle \times p_{k}^{2}%
\text{\label{equation for optimal weighting of a measurement}}%
\end{equation}
follows by squaring both sides of $\left(
\ref{Belavkin pure state optimality condition}\right)  $ and multiplying by
$W_{k}$, whether or not $W_{k}=0$. Furthermore, Belavkin's theorem implies
that optimal measurements on $\operatorname*{Span}\left(  \mathcal{E}%
_{m}\right)  $ satisfy $\operatorname{Rank}\left(  M_{k}^{\text{opt}}\right)
\leq1$.\footnote{For mixed states, $\operatorname{Rank}\left(  M_{k}%
^{\text{opt}}\right)  \leq\operatorname{Rank}\left(  \rho_{k}\right)  $
\cite{Belavkin Optimal multiple quantum statistical hypothesis testing}. See
also equation 5 of \cite{Jezek Rehacek and Fiurasek Finding optimal strategies
for minimum error quantum state discrimination} and \cite{eldar short and
sweet optimal measurements}. Equation
\ref{equation for optimal weighting of a measurement} may be understood
geometrically using \textquotedblleft frame forces,\textquotedblright\ which
have been advocated by Kebo and Benedetto \cite{Kebo thesis, frame force}.}

Belavkin and Maslov generalized Theorem
\ref{theorem belavkin pure state optimal weights} to mixed states (and more
generally to wave pattern recognition) in section 2.2 of \cite{Belavkin Book}.
Iteration of a mixed state version of equation $\left(
\ref{equation for optimal weighting of a measurement}\right)  $ was explored
in \cite{Jezek Rehacek and Fiurasek Finding optimal strategies for minimum
error quantum state discrimination, Hradil et al Maximum Likelihood methods in
quantum mechanics} as a method for numerical computation of optimal measurements.

Many of the known exactly solvable optimal pure state measurements are special
cases of

\begin{theorem}
[Belavkin 1975 \cite{Belavkin optimal distinction of non-orthogonal quantum
signals, Belavkin Optimal multiple quantum statistical hypothesis testing},
also Ban \cite{Ban Optimal signal detection in entanglement-assisted quantum
communication systems}]\label{theorem belavkin square root of P}The
measurement $\left(  \ref{eq formula for pure state weighted measurement M_k}%
\right)  $ for the weights $W_{k}=p_{k}$ is optimal for the pure state
ensemble $\mathcal{E}_{m}$ if the chance of successfully identifying a given
state $\psi_{k}$ is inversely proportional to its \textit{a priori}
probability:\footnote{The weighted measurement defined by $W_{k}=p_{k}$
sometimes appears as $M_{k}=\left\vert e_{k}\right\rangle \left\langle
e_{k}\right\vert $, with $\left\vert e_{k}\right\rangle =\sum_{\ell=1}%
^{m}\left\vert \psi_{\ell}\right\rangle \left(  P^{-1/2}\right)  _{\ell k}$,
where $P$ is the Graham matrix $P_{ij}=\sqrt{p_{i}p_{j}}\left\langle \psi
_{i},\psi_{j}\right\rangle $. Condition $\left(
\ref{condition for optimality of PGM in intuitive form}\right)  $ is then
written as $\left(  \sqrt{P}\right)  _{ii}=\left(  \sqrt{P}\right)  _{jj}$ for
all $i,j$. The equivalence of these two formulations follows from the matrix
identities $\Gamma^{\dag}\left(  \Gamma\Gamma^{\dag}\right)  ^{-1/2}=\left(
\Gamma^{\dag}\Gamma\right)  ^{-1/2}\Gamma^{\dag}$ and $\Gamma\left(
\Gamma^{\dag}\Gamma\right)  ^{-1/2}\Gamma^{\dag}=\left(  \Gamma\Gamma^{\dag
}\right)  ^{1/2}=\sqrt{P}$, where $\Gamma=\sum_{\ell=1}^{m}\sqrt{p_{\ell}%
}\left\vert \ell\right\rangle _{\mathbb{C}^{m}}\left\langle \psi
_{k}\right\vert _{\mathcal{H}}:\mathcal{H}\rightarrow\mathbb{C}^{m}$. Here
$\left\{  \left\vert \ell\right\rangle _{\mathbb{C}^{m}}\right\}  $ is the
standard orthonormal basis of $\mathbb{C}^{m}$.}%
\begin{equation}
p_{k}\left\langle \psi_{k}\right\vert M_{k}\left\vert \psi_{k}\right\rangle
=\text{const\label{condition for optimality of PGM in intuitive form}.}%
\end{equation}

\end{theorem}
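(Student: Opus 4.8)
The plan is to deduce optimality directly from Belavkin's necessary and sufficient criterion, Theorem~\ref{theorem belavkin pure state optimal weights}, by picking the weights in the right normalization; we may assume $p_{k}>0$ for all $k$, any state with $p_{k}=0$ being dropped. Write $\Lambda=\bigl(\sum_{\ell}p_{\ell}|\psi_{\ell}\rangle\langle\psi_{\ell}|\bigr)^{1/2}$, the square root of the average density operator, which is invertible on $\operatorname*{Span}(\mathcal{E}_{m})$ by construction, so that $(\ref{eq formula for pure state weighted measurement M_k})$ with $W_{k}=p_{k}$ reads $M_{k}=p_{k}\,\Lambda^{-1}|\psi_{k}\rangle\langle\psi_{k}|\Lambda^{-1}$. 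The first step is the elementary identity
\[
p_{k}\langle\psi_{k}|M_{k}|\psi_{k}\rangle=\bigl(p_{k}\langle\psi_{k}|\Lambda^{-1}|\psi_{k}\rangle\bigr)^{2},
\]
which holds because $\Lambda^{-1}$ is self-adjoint and positive on the span, so $\langle\psi_{k}|\Lambda^{-1}|\psi_{k}\rangle\geq0$. Thus hypothesis $(\ref{condition for optimality of PGM in intuitive form})$ says precisely that $p_{k}\langle\psi_{k}|\Lambda^{-1}|\psi_{k}\rangle$ is independent of $k$; write the common value as $\sqrt{c}$, where $c$ is the constant of $(\ref{condition for optimality of PGM in intuitive form})$, and note $c>0$ (otherwise $\Lambda^{-1}\psi_{k}=0$, which is impossible).

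The key observation is that the BWSRM $(\ref{eq formula for pure state weighted measurement M_k})$ is unchanged when all weights are scaled by a common positive constant, so $\{M_{k}\}$ is equally the BWSRM with the rescaled weights $W_{k}'=c\,p_{k}$. For these weights the operator $(\ref{In Belavkins theorem this better be invertible})$ of Belavkin's criterion is $\Lambda'=\sqrt{c}\,\Lambda$, again invertible on $\operatorname*{Span}(\mathcal{E}_{m})$, and
\[
p_{k}\langle\psi_{k}|(\Lambda')^{-1}|\psi_{k}\rangle=\frac{1}{\sqrt{c}}\,p_{k}\langle\psi_{k}|\Lambda^{-1}|\psi_{k}\rangle=1
\]
for every $k$. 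Since each $W_{k}'=c\,p_{k}$ is strictly positive, condition $(\ref{Belavkin pure state optimality condition})$ holds with the equality that Theorem~\ref{theorem belavkin pure state optimal weights} requires, and optimality of $\{M_{k}\}$ follows. (Reassuringly, these certifying weights are, up to the irrelevant scale factor, exactly the $W_{k}^{\text{opt}}=\langle\psi_{k}|M_{k}|\psi_{k}\rangle\,p_{k}^{2}$ of $(\ref{equation for optimal weighting of a measurement})$.)

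I expect the only real subtlety to be this choice of normalization, and I would flag it: applying Theorem~\ref{theorem belavkin pure state optimal weights} with the obvious weights $W_{k}=p_{k}$ does \emph{not} work, because there $p_{k}\langle\psi_{k}|\Lambda^{-1}|\psi_{k}\rangle=\sqrt{c}$ with $c=(\operatorname*{Tr}\Lambda)^{2}/m^{2}\leq 1/m$ (using $\operatorname*{Tr}\Lambda^{2}=1$), so for $m\geq2$ the \emph{equality} clause of $(\ref{Belavkin pure state optimality condition})$ is violated even though the measurement is optimal; it is exactly the harmless substitution $p_{k}\mapsto c\,p_{k}$ that repairs the argument. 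Everything else is routine: in the Gram-matrix notation of the footnote the computation above is simply the remark $(\sqrt{P})_{kk}=p_{k}\langle\psi_{k}|\Lambda^{-1}|\psi_{k}\rangle$, and one may note in passing that the common value of $p_{k}\langle\psi_{k}|M_{k}|\psi_{k}\rangle$ equals $P_{\text{succ}}/m$.
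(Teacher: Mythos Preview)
Your proof is correct and is essentially the argument the paper itself gives (for the generalization, Theorem~\ref{theorem weighted measurement optimality conditions}): rewrite the hypothesis as $p_{k}\langle\psi_{k}|\Lambda^{-1}|\psi_{k}\rangle=\sqrt{c}$, rescale the weights by $c$ so that Belavkin's condition~(\ref{Belavkin pure state optimality condition}) holds with equality, and invoke the scale-invariance of BWSRMs. Your parenthetical observations about the necessity of rescaling and the identification $c=(\operatorname*{Tr}\Lambda)^{2}/m^{2}$ are correct and go a bit beyond what the paper records.
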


\noindent Note that condition $\left(
\ref{condition for optimality of PGM in intuitive form}\right)  $ is
sufficient but not necessary, as may be seen by considering direct sums.
Belavkin originally applied Theorem $\ref{theorem belavkin square root of P}$
to homogeneous systems, cyclic systems, and systems of coherent
states.\cite{Belavkin optimal distinction of non-orthogonal quantum signals}
(A generalization of cyclic systems appears in \cite{Usuda et al Minimum error
detection of a classical linear code sending through a quantum channel}).

\subsection{Sub-optimal measurements}

In abstract studies of quantum channel capacities or quantum algorithms,
numerical routines for solving specific instances of optimal measurement
problem are often neither feasible nor desirable: one often has to rely on
sub-optimal measurements. Several extant approximately optimal measurements
are examples of

\begin{definition}
For $r>0$, the \textbf{Belavkin power-weighted square-root measurement
(BWSRM}-$r$\textbf{) }is the BWSRM with weights $W_{k}=p_{k}^{r}$.
\end{definition}

Examples of BWSRM-$r$'s appearing in the literature correspond to $r=1,2,3$.
Note that in the case of equiprobable ($p_{k}=1/m$) pure states that all
BWSRM-$r$'s are identical, and are of pervasive utility in quantum information
theory. (See, for example \cite{pure state HSW theorem}.)\footnote{The study
of BWSRM-$r$'s as approximately-optimal measurements in the equiprobable case
goes as far back as \cite{Helstrom Quantum Detection and Estimation Theory}
and \cite{Curlander thesis MIT}.}

We have already encountered the $r=1$ case in Theorem
\ref{theorem belavkin square root of P}. This measurement came to be known as
the \textquotedblleft pretty good measurement,\textquotedblright\ (PGM)
because of its reintroduction two decades later by Hausladen and Wootters as
an \textit{ad hoc approximately optimal measurement} \cite{HausladenThesis,
HausWootPGM} \textit{with simple error bounds}. Barnum and Knill showed that
the failure rate of the mixed-state version
\[
M_{k}^{\text{PGM}}=\left(  \sum p_{\ell}\rho_{\ell}\right)  ^{-1/2}p_{k}%
\rho_{k}\left(  \sum p_{\ell}\rho_{\ell}\right)  ^{-1/2}%
\]
of the Belavkin-Hausladen-Wootters PGM satisfies the bound
\begin{equation}
P_{\text{fail}}^{\text{opt}}\leq P_{\text{fail}}^{\text{PGM}}\leq
P_{\text{fail}}^{\text{opt}}\left(  1+P_{\text{succ}}^{\text{opt}}\right)
\leq2P_{\text{fail}}^{\text{opt}}\text{,\label{Barnum Knill Mixed Bound}}%
\end{equation}
where $P_{\text{fail}}^{\text{opt}}$ is the minimum-error failure rate.
\cite{Barnum Knill UhOh, Montanaro on the distinguishability of random quantum
states} The bound%
\begin{equation}
P_{\text{fail}}^{\text{PGM}}\leq%
{\displaystyle\sum_{i\neq j}}
p_{i}\left\vert \left\langle \psi_{i},\psi_{j}\right\rangle \right\vert ^{2}
\label{hayden scoop}%
\end{equation}
was proved by Hayden \textit{et al} \cite{Hayden Leung Multiparty Hiding},
generalizing the equiprobable bound of \cite{pure state HSW theorem}%
.\footnote{Equation $\left(  \ref{hayden scoop}\right)  $ follows by summing
the conditional error bound (A6) of \cite{Hayden Leung Multiparty Hiding}.
Bounds based on the pairwise quantities $\left\vert \left\langle \psi_{i}%
,\psi_{j}\right\rangle \right\vert ^{2}$ are inherently limited
\cite{Montanaro on the distinguishability of random quantum states}, although
frequently useful.}

The cube-weighted BWSRM-3 was employed by Ballester, Wehner, and Winter in the
study of state discrimination with post-measurement information.\cite{Wehner
thesis,Wehner State discrimination with post-measurement information}

\subsection{Asymptotically-optimal measurements \& BWSRM-2}

The quadratically weighted\ BWSRM-2 will be of particular interest in the
present work, and its mixed state generalization will be studied in the
sequel. Relatively recently, the mixed state version of BWSRM-2 has appeared
as the first iteration in a sequence of closed form measurements which appear
to converge to the optimal measurement.\cite{Jezek Rehacek and Fiurasek
Finding optimal strategies for minimum error quantum state discrimination,
Hradil et al Maximum Likelihood methods in quantum mechanics}. This weighting
was first specifically considered by Holevo, who was most interested in the
case of nearly orthogonal $\psi_{k}$.

\begin{definition}
\label{definition of asymptotic optimal measurement}A measurement procedure
$G$ for distinguishing the pure-state ensemble $\mathcal{E}_{m}$ is
\textbf{asymptotically optimal \cite{Holovo Assym Opt Hyp Test} }if for fixed
$p_{1},...,p_{m}$ one has
\[
\frac{P_{\text{fail}}^{\text{G}}\left(  \mathcal{E}_{m}\right)  }%
{P_{\text{fail}}^{\text{opt}}\left(  \mathcal{E}_{m}\right)  }\rightarrow1
\]
as the states $\psi_{k}$ approach an orthonormal basis.\footnote{It is
presumably intractable to produce a closed-form measurement process $G$ for
which $P_{\text{fail}}^{G}\left(  \mathcal{E}_{m}\right)  /P_{\text{fail}%
}^{\text{optimal}}\left(  \mathcal{E}_{m}\right)  \rightarrow1$ as the
$\psi_{k}$ and $p_{k}$ are arbitrarily varied in such a way that
$P_{\text{fail}}^{\text{optimal}}\left(  \mathcal{E}_{m}\right)  \rightarrow
0$. Otherwise, one could recover the optimal measurement for a fixed ensemble
$\mathcal{E}_{m}$ on $\mathcal{H}$ by taking the $\lambda\rightarrow1^{-}$
limit of the ensemble $\mathcal{E}_{m+1}^{\prime}\equiv\left\{  \left(
\psi_{k},\left(  1-\lambda\right)  p_{k}\right)  \right\}  \cup\left\{
\left(  \phi,\lambda\right)  \right\}  $ on a dilation $\mathcal{H}^{\prime
}\supset\mathcal{H}$, with $\phi\bot\mathcal{H}$.}
\end{definition}

\noindent Holevo showed that

\begin{theorem}
[Holevo's asymptotic-optimality Theorem (1977) \cite{Holovo Assym Opt Hyp
Test}]\label{Theorem Holevo's Assymtotic optimality theorem}The
quadratically-weighted pure state Belavkin measurement BWSRM-2 is
asymptotically optimal.
\end{theorem}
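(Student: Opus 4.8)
The plan is to obtain a closed formula for the success rate of an arbitrary Belavkin weighted measurement, expand it to second order as the Gram matrix of the $\psi_k$ tends to the identity, and then use Theorem~\ref{theorem belavkin pure state optimal weights} --- which says the optimal measurement is itself a BWSRM --- to compare leading terms. Put $\varepsilon:=\max_{i\neq j}|\langle\psi_i,\psi_j\rangle|$; the claim is that $P_{\text{fail}}^{\text{BWSRM-2}}/P_{\text{fail}}^{\text{opt}}\to1$ along any sequence of (not exactly orthonormal) ensembles with fixed $p_1,\dots,p_m$ and $\varepsilon\to0$. For $W_k>0$ set $X=\sum_\ell W_\ell|\psi_\ell\rangle\langle\psi_\ell|$ and let $B$ be the $m\times m$ weighted Gram matrix $B_{ij}=\sqrt{W_iW_j}\,\langle\psi_i,\psi_j\rangle$, a positive matrix whose diagonal is $(W_k)$. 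Writing $\Gamma=\sum_\ell\sqrt{W_\ell}\,|\ell\rangle_{\mathbb{C}^m}\langle\psi_\ell|$ one has $\Gamma^\dagger\Gamma=X$ and $\Gamma\Gamma^\dagger=B$, so the polar identity $\Gamma(\Gamma^\dagger\Gamma)^{-1/2}=(\Gamma\Gamma^\dagger)^{-1/2}\Gamma$ (used already in the footnote to Theorem~\ref{theorem belavkin square root of P}) gives $\langle\psi_k|X^{-1/2}|\psi_k\rangle=W_k^{-1}(B^{1/2})_{kk}$ and hence
\begin{equation*}
P_{\text{succ}}^{\text{BWSRM-}W}=\sum_k\frac{p_k}{W_k}\big((B^{1/2})_{kk}\big)^{2}.
\end{equation*}

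Next I would expand the right-hand side. Write $B=D+E$ with $D=\operatorname{diag}(W_k)$ and $E$ the off-diagonal part, so $|E_{ij}|=\sqrt{W_iW_j}\,|\langle\psi_i,\psi_j\rangle|=O(\varepsilon)$. The second-order perturbation of the operator square root --- whose first-order diagonal correction vanishes because $E$ has zero diagonal --- gives $(B^{1/2})_{kk}=\sqrt{W_k}-\tfrac12 W_k^{-1/2}\sum_{\ell\neq k}|E_{k\ell}|^{2}/(\sqrt{W_k}+\sqrt{W_\ell})^{2}+O(\varepsilon^{3})$. Substituting and pairing the indices $(k,\ell)$ with $(\ell,k)$ yields
\begin{equation*}
P_{\text{fail}}^{\text{BWSRM-}W}=\sum_{i<j}\frac{p_iW_j+p_jW_i}{(\sqrt{W_i}+\sqrt{W_j})^{2}}\,|\langle\psi_i,\psi_j\rangle|^{2}+O(\varepsilon^{3}),
\end{equation*}
the implied constant depending only on the fixed $p_k$ and on a positive lower bound for the $W_k$. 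A one-variable minimization shows that, over all positive weights, each coefficient $\tfrac{p_iW_j+p_jW_i}{(\sqrt{W_i}+\sqrt{W_j})^{2}}$ attains its minimum $\tfrac{p_ip_j}{p_i+p_j}$ exactly when $\sqrt{W_i}/\sqrt{W_j}=p_i/p_j$, i.e.\ $W_k\propto p_k^{2}$; thus BWSRM-2 minimizes all of the coefficients simultaneously, and
\begin{equation*}
P_{\text{fail}}^{\text{BWSRM-2}}=\sum_{i<j}\frac{p_ip_j}{p_i+p_j}\,|\langle\psi_i,\psi_j\rangle|^{2}+O(\varepsilon^{3}).
\end{equation*}

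For the matching lower bound I would invoke Theorem~\ref{theorem belavkin pure state optimal weights}: the optimal POVM is the BWSRM with some weights $W_k^{\text{opt}}=p_k^{2}\langle\psi_k|M_k^{\text{opt}}|\psi_k\rangle$ (equation~\ref{equation for optimal weighting of a measurement}). Since $P_{\text{fail}}^{\text{opt}}=\sum_k p_k\big(1-\langle\psi_k|M_k^{\text{opt}}|\psi_k\rangle\big)\le P_{\text{fail}}^{\text{BWSRM-2}}=O(\varepsilon^{2})$ and every summand on the left is nonnegative, $1-\langle\psi_k|M_k^{\text{opt}}|\psi_k\rangle=O(\varepsilon^{2})$, so $W_k^{\text{opt}}=p_k^{2}\big(1+O(\varepsilon^{2})\big)$ stays in a fixed compact neighbourhood of $(p_k^{2})$ bounded away from $0$. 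Therefore the expansion of the previous paragraph applies with a \emph{uniform} remainder at $W=W^{\text{opt}}$, and because each of its coefficients is $\ge\tfrac{p_ip_j}{p_i+p_j}$ we get $P_{\text{fail}}^{\text{opt}}\ge\sum_{i<j}\tfrac{p_ip_j}{p_i+p_j}|\langle\psi_i,\psi_j\rangle|^{2}+O(\varepsilon^{3})$. Combining this with the trivial $P_{\text{fail}}^{\text{opt}}\le P_{\text{fail}}^{\text{BWSRM-2}}$ and with the fact that $\sum_{i<j}\tfrac{p_ip_j}{p_i+p_j}|\langle\psi_i,\psi_j\rangle|^{2}$ is bounded below by a fixed positive multiple of $\varepsilon^{2}$ --- hence dominates the $O(\varepsilon^{3})$ errors --- squeezes the ratio to $1$.

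The crux is the uniformity of the $O(\varepsilon^{3})$ remainder in the square-root expansion: it must hold with one constant as $W$ ranges over a neighbourhood of $(p_k^{2})$, so the a priori control $W_k^{\text{opt}}=p_k^{2}+O(\varepsilon^{2})$ --- itself nothing but a consequence of the crude comparison $P_{\text{fail}}^{\text{opt}}\le P_{\text{fail}}^{\text{BWSRM-2}}$ --- is what makes the argument work. A resolvent bound for $(D+E)^{1/2}$ with $D\succeq\delta I$ supplies the uniform remainder; the polar identity, the symmetrization over index pairs, and the one-variable minimization are routine. (Run in the other direction, the same expansion yields the converse announced in the abstract: an asymptotically optimal BWSRM must be quadratically weighted.)
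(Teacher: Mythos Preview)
Your argument is correct. The Gram-matrix identity for $P_{\text{succ}}^{\text{BWSRM-}W}$ is the paper's equation~(\ref{expression to show WLSM ER is cont}); the second-order square-root expansion, the pairwise symmetrization, and the one-variable minimization all check out; and your lower bound is sound because equation~(\ref{equation for optimal weighting of a measurement}) together with $1-\langle\psi_k|M_k^{\text{opt}}|\psi_k\rangle\le P_{\text{fail}}^{\text{opt}}/p_k=O(\varepsilon^2)$ keeps $W^{\text{opt}}$ in a fixed compact set bounded away from zero, so the resolvent/perturbation remainder is indeed uniform there.

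The route, however, is genuinely different from Holevo's. The paper does not reprove Theorem~\ref{Theorem Holevo's Assymtotic optimality theorem} but points to Holevo's key device, Theorem~\ref{Theorem Holevo approx min principle}: BWSRM-2 is the \emph{exact} minimizer of the surrogate cost $C^{\text{Holevo}}(\{e_k\})=\sum_k p_k\|\psi_k-e_k\|^2$ over orthonormal frames, and near orthonormality this surrogate agrees with $P_{\text{fail}}$ to leading order, so the two minimizers coincide asymptotically. That argument is a single least-squares/polar-decomposition step and yields an exact variational characterization of BWSRM-2 of independent interest. Your approach trades that for a direct second-order perturbation of $(A^\dagger A)^{1/2}$ together with Belavkin's structural Theorem~\ref{theorem belavkin pure state optimal weights}; what it buys is the explicit leading term $P_{\text{fail}}^{\text{opt}}=\sum_{i<j}\frac{p_ip_j}{p_i+p_j}|\langle\psi_i,\psi_j\rangle|^2+O(\varepsilon^3)$ and, as you observe, an immediate proof of the converse that the paper obtains separately in Section~\ref{asymptotic optimality section} via the two-state limit and strict convexity.
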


As we will see in section $\ref{section comparison of 2 pure state meas}$,
this property is \textbf{not} shared by the \textquotedblleft pretty good
measurement.\textquotedblright\ The key idea in Holevo's proof was the
construction of BWSRM-2 using an approximate minimal principle:

\begin{theorem}
[Holevo 1977 \cite{Holovo Assym Opt Hyp Test}.]%
\label{Theorem Holevo approx min principle}Assume that the states $\psi_{k}$
are linearly independent. Then the von Neumann measurement $M_{k}=\left\vert
e_{k}\right\rangle \left\langle e_{k}\right\vert $ minimizing%
\begin{equation}
C^{\text{Holevo}}\left(  \left\{  e_{k}\right\}  \right)  =\sum_{k=1}^{m}%
p_{k}\left\Vert \psi_{k}-e_{k}\right\Vert ^{2} \label{formula C Holevo}%
\end{equation}
over orthonormal\footnote{Orthogonal measurements are optimal for
distinguishing linearly-independent pure states.\cite{kennedy linearly
independent implies von Neumann,Belavkin optimal distinction of non-orthogonal
quantum signals,Helstrom Quantum Detection and Estimation Theory,Mochon PGM}}
sets $\left\{  e_{k}\right\}  $ is the quadratically weighted Belavkin
measurement BWSRM-$2$.
\end{theorem}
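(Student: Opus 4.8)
The plan is to reduce the constrained minimization of $C^{\mathrm{Holevo}}$ to a trace-maximization problem that is solved cleanly by a polar decomposition. Since each $\psi_k$ and each $e_k$ is a unit vector, expand $\|\psi_k-e_k\|^2 = 2-2\operatorname{Re}\langle\psi_k|e_k\rangle$, so that, using $\sum_k p_k=1$, one has $C^{\mathrm{Holevo}}(\{e_k\}) = 2 - 2\operatorname{Re}\sum_{k=1}^m p_k\langle\psi_k|e_k\rangle$. Hence minimizing $C^{\mathrm{Holevo}}$ over orthonormal $m$-tuples $\{e_k\}$ is equivalent to \emph{maximizing} $\operatorname{Re}\sum_k p_k\langle\psi_k|e_k\rangle$.

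Next I would encode an orthonormal set $\{e_k\}_{k=1}^m$ as the isometry $V\colon\mathbb{C}^m\to\mathcal{H}$ with $V|k\rangle = e_k$, where $\{|k\rangle\}$ is the standard basis of $\mathbb{C}^m$, and introduce $A = \sum_{k=1}^m p_k|\psi_k\rangle\langle k|\colon\mathbb{C}^m\to\mathcal{H}$. A one-line computation gives $\operatorname{Re}\sum_k p_k\langle\psi_k|e_k\rangle = \operatorname{Re}\operatorname{Tr}(A^{\dagger}V)$, while $AA^{\dagger} = \sum_\ell p_\ell^2|\psi_\ell\rangle\langle\psi_\ell|$ is exactly the operator $F$ whose inverse square root defines the BWSRM with weights $W_k=p_k^2$. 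Linear independence of the $\psi_k$ makes $A$ injective, so $|A|=(A^{\dagger}A)^{1/2}$ is invertible on $\mathbb{C}^m$ and $F=AA^{\dagger}$ is invertible on $\operatorname{Span}(\mathcal{E}_m)$; let $A=V_0|A|$ be the polar decomposition, so that $V_0 = A|A|^{-1} = (AA^{\dagger})^{-1/2}A = F^{-1/2}A$ is a genuine isometry onto $\operatorname{Span}(\mathcal{E}_m)$ (checked via the singular value decomposition of $A$, which is the identity already invoked in the paper's footnote).

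The main step is the standard inequality $\operatorname{Re}\operatorname{Tr}(A^{\dagger}V)\le\operatorname{Tr}|A|$ for every isometry $V$, together with a uniqueness analysis. Writing $U=V_0^{\dagger}V$ (a contraction on $\mathbb{C}^m$) and diagonalizing $|A|=\sum_i\lambda_i|f_i\rangle\langle f_i|$ with all $\lambda_i>0$ (injectivity), one gets $\operatorname{Re}\operatorname{Tr}(A^{\dagger}V)=\operatorname{Re}\operatorname{Tr}(|A|U)=\sum_i\lambda_i\operatorname{Re}\langle f_i|U|f_i\rangle\le\sum_i\lambda_i=\operatorname{Tr}|A|$; equality forces $\langle f_i|U|f_i\rangle=1$, hence $U|f_i\rangle=|f_i\rangle$ for each $i$, hence $U=I_{m}$. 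Then $V-V_0$ maps $\mathbb{C}^m$ into $\operatorname{Span}(\mathcal{E}_m)^{\perp}$, and imposing $V^{\dagger}V=I_m$ forces $V=V_0$. Therefore the unique minimizer is $e_k=V_0|k\rangle=F^{-1/2}p_k|\psi_k\rangle$, so that $M_k=|e_k\rangle\langle e_k| = F^{-1/2}\bigl(p_k^2|\psi_k\rangle\langle\psi_k|\bigr)F^{-1/2}$, which is precisely $(\ref{eq formula for pure state weighted measurement M_k})$ with $W_k=p_k^2$, i.e.\ the BWSRM-$2$.

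I expect the only genuinely delicate point to be the equality/uniqueness part of the trace inequality — in particular arguing that the optimal $\{e_k\}$ gains nothing by acquiring components outside $\operatorname{Span}(\mathcal{E}_m)$ — and making sure the identification $V_0=F^{-1/2}A$ is legitimate on the relevant subspace. Both are controlled by the linear-independence hypothesis, which is exactly what guarantees invertibility of $|A|$ on $\mathbb{C}^m$ and of $F$ on $\operatorname{Span}(\mathcal{E}_m)$; everything else is routine algebra using the unit normalization $\|\psi_k\|=1$ and $\sum_k p_k=1$.
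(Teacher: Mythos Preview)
Your argument is correct. The reduction $C^{\mathrm{Holevo}}=2-2\operatorname{Re}\operatorname{Tr}(A^{\dagger}V)$, the polar decomposition $A=V_0|A|$ with $V_0=F^{-1/2}A$, the trace inequality $\operatorname{Re}\operatorname{Tr}(A^{\dagger}V)\le\operatorname{Tr}|A|$, and the uniqueness step (using injectivity of $|A|$ and then $V^{\dagger}V=I_m$ to kill any component of $V$ orthogonal to $\operatorname{Span}(\mathcal{E}_m)$) all check out.

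As for comparison: the paper does not supply its own proof of this theorem---it is simply attributed to Holevo \cite{Holovo Assym Opt Hyp Test}. However, the bibliographic note the paper attaches to that reference records that Holevo's argument proceeds via the identity $\sum_j\pi_j\|\psi_j-e_j\|^2=2\bigl(1-\operatorname{Re}\operatorname{Tr}(U\Pi\Gamma^{1/2})\bigr)$ and then extremizes using the polar factor $V^{\ast}=|\Pi\Gamma^{1/2}|(\Pi\Gamma^{1/2})^{-1}$, arriving at the minimum $2\bigl(1-\operatorname{Tr}|\Gamma^{1/2}\Pi|\bigr)$. Modulo notation ($\Pi,\Gamma$ versus your $A,V$), this is exactly your route: rewrite the cost as a real trace, maximize it by polar decomposition, and read off the optimizer. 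So your proposal is essentially Holevo's original proof, with the added benefit that you spell out the uniqueness argument carefully.
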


Theorem $\ref{Theorem Holevo approx min principle}$ was generalized by Eldar
and Forney \cite{EldarSquareRootMeasurement}, who showed that the BWSRM with
weights $W_{k}$ minimizes $C^{\left\{  W_{k}\right\}  }=\sum W_{k}\left\Vert
\psi_{k}-e_{k}\right\Vert ^{2}$ over POVMs $M_{k}=\left\vert e_{k}%
\right\rangle \left\langle e_{k}\right\vert $ without any assumption of linear
independence.\footnote{One can recover this generalization from Holevo's
argument using Naimark's Theorem. \cite{Kebo thesis} Note that the cost
function $C^{\text{Holevo}}$ for \textit{arbitrary }$p_{k}$ already appears as
eq. 8 of \cite{Holovo Assym Opt Hyp Test}.}

\subsection{Results}

In section \ref{asymptotic optimality section} it is shown that a weighted
measurement is asymptotically optimal only if it is quadratically weighted,
proving a converse to Holevo's asymptotic optimality theorem. In section
\ref{section comparison of 2 pure state meas} the PGM is found to be
categorically worse than the quadratically weighted measurement for two pure
states. In section \ref{section counterexamples for three states} we make a
heuristic comparison between various weightings, and present a counter-example
to show that the relationship between weightings is more complicated for
ensembles of more than two states. Finally, in section
\ref{section optimality conditions for weighted measurements} we compare
sufficient optimality conditions for various weightings.

\section{Pure State weighted measurements}

\subsection{Continuity\label{continuity section}}

Although weighted measurements are defined using the singular map $x\mapsto
x^{-1/2}$, one still has

\begin{theorem}
\label{Theorem continuity of WSRM}For fixed weights $W_{k}$, $k=1,...,m$, the
success rate of the weighted measurement for distinguishing the pure state
ensemble $\mathcal{E}_{m}=\left\{  \left(  \psi_{k},p_{k}\right)  \right\}
_{k=1,...,m}$ is a jointly continuous function of the $\psi_{k}$ and $p_{k}$.
\end{theorem}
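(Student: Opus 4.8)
The plan is to rewrite the success rate so that the singular map $x\mapsto x^{-1/2}$ disappears from it entirely, leaving only the operator square root $A\mapsto A^{1/2}$, which \emph{is} continuous on the cone of positive semidefinite matrices. Once that re-expression is in hand the theorem is essentially automatic.

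Concretely, I would fix the weights and set $K=\{k:W_k>0\}$; since $M_k=0$ whenever $W_k=0$, the sum defining $P_{\text{succ}}$ runs over $k\in K$, and $K$ does not change as the $\psi_k$ and $p_k$ vary. Writing $S=\sum_{\ell\in K}W_\ell|\psi_\ell\rangle\langle\psi_\ell|$, formula $(\ref{eq formula for pure state weighted measurement M_k})$ gives $M_k=|e_k\rangle\langle e_k|$ with $|e_k\rangle=\sqrt{W_k}\,S^{-1/2}|\psi_k\rangle$, so that $\langle\psi_k|M_k|\psi_k\rangle=W_k\bigl(\langle\psi_k|S^{-1/2}|\psi_k\rangle\bigr)^{2}$, the inner factor being real and nonnegative. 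Next I would introduce the analysis operator $\Gamma=\sum_{\ell\in K}\sqrt{W_\ell}\,|\ell\rangle\langle\psi_\ell|:\mathcal{H}\to\mathbb{C}^{|K|}$, so that $\Gamma^{\dagger}\Gamma=S$ and $\Gamma\Gamma^{\dagger}=G$, where $G$ is the Hermitian $|K|\times|K|$ matrix with entries $G_{ij}=\sqrt{W_iW_j}\,\langle\psi_i,\psi_j\rangle$. Using the standard square-root-measurement identities recalled in the footnote to Theorem $\ref{theorem belavkin square root of P}$ (interpreted, as there, with negative powers restricted to supports) — in particular $\Gamma(\Gamma^{\dagger}\Gamma)^{-1/2}=(\Gamma\Gamma^{\dagger})^{-1/2}\Gamma$ — together with the elementary relations $\Gamma|\psi_k\rangle=W_k^{-1/2}G|k\rangle$ and $\langle k|\Gamma=\sqrt{W_k}\,\langle\psi_k|$, a short computation gives $\langle\psi_k|S^{-1/2}|\psi_k\rangle=W_k^{-1}(G^{1/2})_{kk}$, and hence
\[
P_{\text{succ}}=\sum_{k\in K}\frac{p_k}{W_k}\bigl[(G^{1/2})_{kk}\bigr]^{2}.
\]
(For $W_k=p_k$ this reproduces the Hausladen--Wootters formula $P_{\text{succ}}^{\text{PGM}}=\sum_k(\sqrt{P})_{kk}^{2}$ with $P_{ij}=\sqrt{p_ip_j}\,\langle\psi_i,\psi_j\rangle$, which is a useful consistency check.)

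From this formula the conclusion is immediate: the matrix $G$ depends continuously — indeed polynomially — on the inner products $\langle\psi_i,\psi_j\rangle$, hence on the $\psi_k$, and not at all on the $p_k$; the map $A\mapsto A^{1/2}$ is norm-continuous on positive semidefinite matrices, since $t\mapsto\sqrt{t}$ is continuous on $[0,\infty)$ and continuous functional calculus is norm-continuous on Hermitian matrices; so $(G^{1/2})_{kk}$ and its square are continuous in the $\psi_k$; and multiplication by the fixed constants $1/W_k$ and by the continuously varying $p_k$, followed by summation over the finite set $K$, preserves joint continuity.

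The only step carrying real content is the replacement of $S^{-1/2}$ by $G^{1/2}$: the operator $S^{-1/2}$ is genuinely discontinuous precisely where the $W_k\psi_k$ become linearly dependent and an eigenvalue of $S$ collapses to $0$, so any argument must exploit the cancellation that turns the negative power into the positive power $G^{1/2}$ before continuity can be read off. Everything after that is routine bookkeeping with the standard square-root-measurement identities together with elementary continuity of the functional calculus.
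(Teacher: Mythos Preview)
Your proposal is correct and essentially identical to the paper's proof: the paper defines $A=\sum_k\sqrt{W_k}\,|\psi_k\rangle\langle k|:\mathbb{C}^m\to\mathcal{H}$ (your $\Gamma^{\dagger}$), uses the same identity $A^{\dagger}(AA^{\dagger})^{-1/2}A=(A^{\dagger}A)^{1/2}$ to obtain $P_{\text{succ}}=\sum_k\frac{p_k}{W_k}\bigl(\langle k|(A^{\dagger}A)^{1/2}|k\rangle\bigr)^2$, and then invokes continuity of the square root. Your $G$ is precisely $A^{\dagger}A$ restricted to $K\times K$, and you are a bit more careful than the paper in explicitly disposing of the indices with $W_k=0$ (the paper's displayed formula has a harmless but formally undefined $p_k/W_k$ there).
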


\begin{proof}
Define the operator $A:\mathbb{C}^{m}\rightarrow\mathcal{H}$ by%
\begin{equation}
A=\sum_{k=1}^{m}\sqrt{W_{k}}\left\vert \psi_{k}\right\rangle _{\mathcal{H}%
}~\left\langle k\right\vert _{\mathbb{C}^{m}}%
\text{,\label{equation convert weightings to operator}}%
\end{equation}
where $\left\{  \left\vert k\right\rangle _{\mathbb{C}^{m}}\right\}  $ is the
standard orthonormal basis of $\mathbb{C}^{m}$. Then%
\begin{align}
P_{\text{succ}}^{W\text{-weighted}}  &  =\sum_{k=1}^{m}p_{k}\left\vert
\left\langle \psi_{k}\right\vert \left(  \sum_{\ell=1}^{m}W_{\ell}\left\vert
\psi_{\ell}\right\rangle \left\langle \psi_{\ell}\right\vert \right)
^{-1/2}W_{k}^{1/2}\left\vert \psi_{k}\right\rangle \right\vert ^{2}\nonumber\\
&  =\sum_{k=1}^{m}\frac{p_{k}}{W_{k}}\left\vert \left\langle k\right\vert
_{\mathbb{C}^{m}}A^{\dag}\left(  AA^{\dag}\right)  ^{-1/2}A\left\vert
k\right\rangle _{\mathbb{C}^{m}}\right\vert ^{2}\nonumber\\
&  =\sum_{k=1}^{m}\frac{p_{k}}{W_{k}}\left(  \left\langle k\right\vert \left(
A^{\dag}A\right)  ^{+1/2}\left\vert k\right\rangle \right)  ^{2}.
\label{expression to show WLSM ER is cont}%
\end{align}
Continuity of $P_{\text{fail}}$ follows from the continuity of the square
root.\footnote{By the Weierstrauss approximation theorem \cite{baby rudin},
given $\varepsilon>0$ one can find a polynomial $P$ such that $\left\vert
P\left(  \lambda\right)  -\sqrt{\lambda}\right\vert <\varepsilon$ for all
$\lambda$ in the interval $I=\left[  0,\sum W_{k}\right]  .$ Since $I$
contains the spectrum of $A^{\dag}A$ for any choice of $\left\{  \psi
_{k}\right\}  $, continuity of $\left(
\ref{expression to show WLSM ER is cont}\right)  $ is guaranteed by Theorem
7.12 of \cite{baby rudin}.}
\end{proof}

\newpage

\subsection{\label{section comparison of 2 pure state meas}Explicit comparison
of weighted Belavkin measurements for 2 pure states}

We first consider binary ensembles:

\begin{theorem}
\label{Theorem success prob of weighted 2-d measurements}The failure rates for
distinguishing the binary ensemble $\mathcal{E}_{2}$ using optimal and
weighted measurements are given by%
\begin{align}
P_{\text{fail}}^{\text{optimal}}  &  =\frac{1}{2}-\sqrt{\frac{1}{4}-p_{1}%
p_{2}\left\vert \left\langle \psi_{1},\psi_{2}\right\rangle \right\vert ^{2}%
}\label{formula for optimal 2 pure state failure rate}\\
P_{\text{fail}}^{\text{weighted}}  &  =\frac{\left(  p_{1}W_{2}+p_{2}%
W_{1}\right)  \cos^{2}\theta}{W_{1}+W_{2}+2\sqrt{W_{1}W_{2}}\left\vert
\sin\theta\right\vert }, \label{formula for W-weighted failure rate}%
\end{align}
where $\cos\theta=\left\vert \left\langle \psi_{1},\psi_{2}\right\rangle
\right\vert $.
\end{theorem}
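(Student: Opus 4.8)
The plan is to establish the weighted rate (\ref{formula for W-weighted failure rate}) first, as the main computation, and then obtain the optimal rate (\ref{formula for optimal 2 pure state failure rate}) either from the Helstrom bound or by minimizing (\ref{formula for W-weighted failure rate}) over the weights. Throughout one may take $\theta\in[0,\pi/2]$ since $\cos\theta=|\langle\psi_1,\psi_2\rangle|\ge 0$ (which is also the only reason $|\sin\theta|$ occurs in the statement), and one may assume $W_1,W_2>0$ and $\psi_1,\psi_2$ linearly independent, the boundary cases following from the joint continuity of Theorem \ref{Theorem continuity of WSRM}.

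For the weighted rate I would specialize the identity (\ref{expression to show WLSM ER is cont}) from the proof of Theorem \ref{Theorem continuity of WSRM} to $m=2$. With $A$ as in (\ref{equation convert weightings to operator}) and $c=\langle\psi_1,\psi_2\rangle$, the matrix $G=A^{\dag}A$ has diagonal $(W_1,W_2)$ and off-diagonal entry $\sqrt{W_1W_2}\,c$, so $\operatorname{Tr}G=W_1+W_2$ and $\det G=W_1W_2(1-|c|^2)=W_1W_2\sin^2\theta$; note the phase of $c$ drops out. Using the elementary formula $\sqrt{G}=\bigl(G+\sqrt{\det G}\,\openone\bigr)/\sqrt{\operatorname{Tr}G+2\sqrt{\det G}}$ for positive semidefinite $2\times2$ matrices (checked by squaring and applying the Cayley--Hamilton relation $G^2=(\operatorname{Tr}G)G-(\det G)\openone$) gives
\[
\bigl(G^{1/2}\bigr)_{kk}=\frac{W_k+\sqrt{W_1W_2}\,|\sin\theta|}{\sqrt{W_1+W_2+2\sqrt{W_1W_2}\,|\sin\theta|}},\qquad k=1,2 .
\]
Substituting into (\ref{expression to show WLSM ER is cont}), expanding $(W_k+\sqrt{W_1W_2}|\sin\theta|)^2/W_k=W_k+2\sqrt{W_1W_2}|\sin\theta|+W_{3-k}\sin^2\theta$, summing, and using $p_1+p_2=1$, one gets $P_{\text{succ}}^{\text{weighted}}D=p_1W_1+p_2W_2+2\sqrt{W_1W_2}|\sin\theta|+(p_1W_2+p_2W_1)\sin^2\theta$ where $D$ is the denominator of (\ref{formula for W-weighted failure rate}); subtracting this from $D$ and using $1-\sin^2\theta=\cos^2\theta$ collapses everything to $P_{\text{fail}}^{\text{weighted}}D=(p_1W_2+p_2W_1)\cos^2\theta$, which is (\ref{formula for W-weighted failure rate}).

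For the optimal rate I would invoke the two-hypothesis Helstrom formula $P_{\text{fail}}^{\text{optimal}}=\tfrac12\bigl(1-\Vert p_1\rho_1-p_2\rho_2\Vert_1\bigr)$ with $\rho_k=|\psi_k\rangle\langle\psi_k|$, and compute the trace norm of the rank-$\le2$ operator $p_1\rho_1-p_2\rho_2$ from its trace $p_1-p_2$ and its determinant on $\operatorname{Span}(\psi_1,\psi_2)$, namely $-p_1p_2\sin^2\theta$: the two eigenvalues have opposite sign, so $\Vert p_1\rho_1-p_2\rho_2\Vert_1$ equals their difference $\sqrt{(p_1-p_2)^2+4p_1p_2\sin^2\theta}=\sqrt{1-4p_1p_2|\langle\psi_1,\psi_2\rangle|^2}$, and rearranging yields (\ref{formula for optimal 2 pure state failure rate}). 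Alternatively — and this is a good internal consistency check — one can minimize (\ref{formula for W-weighted failure rate}) directly over $W_1,W_2\ge0$: putting $\sqrt{W_1}=\cos\phi$, $\sqrt{W_2}=\sin\phi$ by homogeneity and writing $\psi=2\phi$, $a=p_2-p_1$, $b=\sin\theta$, the rate becomes $\tfrac12\cos^2\theta\,(1+a\cos\psi)/(1+b\sin\psi)$, whose range over $\psi$ is governed by the solvability of $a\cos\psi-tb\sin\psi=t-1$, i.e.\ $(t-1)^2\le a^2+t^2b^2$; the smaller root of the resulting quadratic in $t$, together with $p_1p_2=\tfrac14(1-(p_1-p_2)^2)$, reproduces (\ref{formula for optimal 2 pure state failure rate}), and it equals $P_{\text{fail}}^{\text{opt}}$ since every optimal binary measurement is a BWSRM (Theorem \ref{theorem belavkin pure state optimal weights}).

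There is no genuine obstacle here: both halves are finite computations. The only points needing care are (i) the degenerate cases $\theta=0$ (parallel states) or $W_k=0$, where $G$ resp.\ $\Lambda$ is non-invertible — handled, for fixed positive weights, by Theorem \ref{Theorem continuity of WSRM}, the displayed formulas being the continuous extensions — and (ii) bookkeeping of the square roots when expanding $(G^{1/2})_{kk}^2$ and collecting terms, which is routine but must be carried out carefully to see the numerator factor emerge as $\cos^2\theta$.
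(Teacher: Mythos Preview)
Your proposal is correct and follows essentially the same route as the paper: both derive the weighted failure rate from the identity $(\ref{expression to show WLSM ER is cont})$ by computing $(A^{\dag}A)^{1/2}$ via the closed-form square root of a $2\times2$ positive matrix, and both obtain the optimal rate from Helstrom (the paper simply cites the reference, whereas you rederive it via the trace-norm formula and add an independent consistency check by minimizing $(\ref{formula for W-weighted failure rate})$ over the weights). The extra minimization argument is a nice addition not present in the paper.
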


\begin{proof}
Equation $\left(  \ref{formula for optimal 2 pure state failure rate}\right)
$ is equation 2.34 on page 113 of \cite{Helstrom Quantum Detection and
Estimation Theory}.

For an arbitrary $2\times2$ positive matrix $B$ it is easy to use the spectral
theorem to verify that%
\[
B^{+1/2}=\left(  2\sqrt{\det B}+\operatorname*{Tr}B\right)  ^{-1/2}\left(
B+\sqrt{\det B}\times%
\openone
\right)  .
\]
For $A$ defined by $\left(  \ref{equation convert weightings to operator}%
\right)  $ one has%
\begin{align*}
\det A^{\dag}A  &  =W_{1}W_{2}\sin^{2}\theta\\
\operatorname*{Tr}A^{\dag}A  &  =W_{1}+W_{2}.
\end{align*}
Equation $\left(  \ref{formula for W-weighted failure rate}\right)  $ now
follows from $\left(  \ref{expression to show WLSM ER is cont}\right)  $:%
\begin{align*}
P_{\text{succeed}}^{W_{1},W_{2}}  &  =\sum\frac{p_{k}}{W_{k}}\left\vert
\left\langle k\right\vert \left(  A^{\dag}A\right)  ^{+1/2}\left\vert
k\right\rangle \right\vert ^{2}\\
&  =\sum\frac{p_{k}}{W_{k}}\left\vert \left(  2\sqrt{W_{1}W_{2}}\left\vert
\sin\theta\right\vert +W_{1}+W_{2}\right)  ^{-1/2}\left(  W_{k}+\sqrt
{W_{1}W_{2}}\left\vert \sin\theta\right\vert \right)  \right\vert ^{2}\\
&  =1-\frac{\sum p_{k}W_{1-k}\cos^{2}\theta}{W_{1}+W_{2}+2\sqrt{W_{1}W_{2}%
}\left\vert \sin\theta\right\vert }.
\end{align*}

\end{proof}

\begin{theorem}
[Holevo's measurement is better than the PGM\ for two pure states]For
distinguishing the $2$-pure-state ensemble $\mathcal{E}_{2}$ one has the
following inequalities
\begin{align}
P_{\text{fail}}^{\text{PGM}}  &  \geq P_{\text{fail}}^{\text{Holevo}%
},\label{eq PGM worse HOM for 2 pure states}\\
2=\sup_{\mathcal{E}_{2}}\frac{P_{\text{fail}}^{\text{PGM}}}{P_{\text{fail}%
}^{\text{optimal}}}  &  >\sup_{\mathcal{E}_{2}}\frac{P_{\text{fail}%
}^{\text{Holevo}}}{P_{\text{fail}}^{\text{optimal}}}=\frac{\sqrt{2}+1}%
{2}\approx1.207\label{supremum error for two dimensions}\\
&  >\sup_{\mathcal{E}_{2}}\frac{P_{\text{fail}}^{\text{cubic weighting}}%
}{P_{\text{fail}}^{\text{optimal }}}\approx1.118,
\label{numeric value for cubic
failure on two-state ensembles}%
\end{align}
with equality in $\left(  \ref{eq PGM worse HOM for 2 pure states}\right)  $
iff $p_{1},p_{2}\in\left\{  0,1/2,1\right\}  $ or $\left\langle \psi_{1}%
,\psi_{2}\right\rangle =0$.
\end{theorem}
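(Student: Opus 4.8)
The plan is to reduce everything to the closed-form failure rates of Theorem~\ref{Theorem success prob of weighted 2-d measurements} and to treat each assertion as an elementary two-variable optimization. Put $x:=p_{1}p_{2}\in[0,\tfrac{1}{4}]$ and $s:=\left\vert \sin\theta\right\vert \in[0,1]$, so that $\cos^{2}\theta=1-s^{2}$ and, since $p_{1}+p_{2}=1$, $p_{1}^{2}+p_{2}^{2}=1-2x$ and $p_{1}^{3}+p_{2}^{3}=1-3x$. Specializing $\left(\ref{formula for W-weighted failure rate}\right)$ to the weights $W_{k}=p_{k}$, $W_{k}=p_{k}^{2}$, $W_{k}=p_{k}^{3}$ respectively gives
\begin{gather*}
P_{\text{fail}}^{\text{PGM}}=\frac{2x(1-s^{2})}{1+2\sqrt{x}\,s},\qquad P_{\text{fail}}^{\text{Holevo}}=\frac{x(1-s^{2})}{(1-2x)+2xs},\\
P_{\text{fail}}^{\text{cubic}}=\frac{x(1-2x)(1-s^{2})}{(1-3x)+2x^{3/2}s},
\end{gather*}
while rationalizing $\left(\ref{formula for optimal 2 pure state failure rate}\right)$ gives $P_{\text{fail}}^{\text{optimal}}=x(1-s^{2})/(\tfrac{1}{2}+q)$ with $q:=\sqrt{\tfrac{1}{4}-x(1-s^{2})}$. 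Every assertion below is trivial when $x(1-s^{2})=0$, so I assume $x>0$ and $s<1$ throughout.

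First I would prove $\left(\ref{eq PGM worse HOM for 2 pure states}\right)$. Dividing out the common factor $x(1-s^{2})>0$ and cross-multiplying the two positive denominators, $P_{\text{fail}}^{\text{PGM}}\ge P_{\text{fail}}^{\text{Holevo}}$ becomes $2\bigl((1-2x)+2xs\bigr)\ge1+2\sqrt{x}\,s$, which, using $1-2x\ge0$, rearranges to
\[
\bigl(1-2\sqrt{x}\bigr)\bigl(1+2\sqrt{x}(1-s)\bigr)\ge0 .
\]
Both factors are $\ge0$ because $x\le\tfrac{1}{4}$ and $s\le1$, and the second is in fact $\ge1$; hence equality forces $\sqrt{x}=\tfrac{1}{2}$, i.e.\ $p_{1}=p_{2}=\tfrac{1}{2}$. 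Combined with the excluded case $x(1-s^{2})=0$ --- that is, $p_{1}p_{2}=0$ (so $\{p_{1},p_{2}\}=\{0,1\}$) or $\left\langle \psi_{1},\psi_{2}\right\rangle =0$ --- this is exactly the stated equality condition.

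For the supremum statements I would study the three ratios $R_{\text{PGM}},R_{\text{Holevo}},R_{\text{cubic}}$, each of the form $P_{\text{fail}}^{\bullet}/P_{\text{fail}}^{\text{optimal}}$; the common factor $x(1-s^{2})$ cancels, leaving
\[
R_{\text{PGM}}=\frac{1+2q}{1+2\sqrt{x}\,s},\qquad R_{\text{Holevo}}=\frac{\tfrac{1}{2}+q}{(1-2x)+2xs},\qquad R_{\text{cubic}}=\frac{(1-2x)(\tfrac{1}{2}+q)}{(1-3x)+2x^{3/2}s}.
\]
For the PGM, $1+2q=1+\sqrt{1-4x(1-s^{2})}\le2$ while the denominator is $\ge1$, so $R_{\text{PGM}}<2$ always; and letting $p_{1}\to0$ (hence $x\to0$) with $s$ fixed gives $R_{\text{PGM}}\to2$. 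Thus $\sup_{\mathcal{E}_{2}}R_{\text{PGM}}=2$, not attained (consistent with the right-hand side of the Barnum--Knill bound $\left(\ref{Barnum Knill Mixed Bound}\right)$). For Holevo I would show $R_{\text{Holevo}}(x,\cdot)$ is nonincreasing in $s$ on $[0,1)$: writing $a:=1-2x\in[\tfrac{1}{2},1)$ and $b:=1-4x=2a-1\in[0,1)$, differentiating, clearing positive denominators, and --- in the only nontrivial subcase --- squaring reduces $\partial_{s}R_{\text{Holevo}}\le0$ to $Q(s):=\tfrac{b(a+1)}{2}s^{2}-abs+\tfrac{b(a-1)}{2}\le0$ on $[0,1]$, which holds because $Q$ is an upward parabola with $Q(1)=0$ and product of roots $\tfrac{a-1}{a+1}<0$, so its other root is negative. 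Hence $\sup_{s}R_{\text{Holevo}}(x,s)=R_{\text{Holevo}}(x,0)=(\tfrac{1}{2}+\sqrt{\tfrac{1}{4}-x})/(1-2x)$; at $s=0$ the states coincide, $P_{\text{fail}}^{\text{optimal}}=\min(p_{1},p_{2})$, and this equals $\max(p_{1},p_{2})/(p_{1}^{2}+p_{2}^{2})$. Maximizing $p\mapsto(1-p)/(1-2p+2p^{2})$ over $p\in(0,\tfrac{1}{2}]$ gives the critical equation $2p^{2}-4p+1=0$, i.e.\ $p=1-1/\sqrt{2}$, at which $1-2p+2p^{2}=2p$ and the value is $(1-p)/(2p)=\tfrac{\sqrt{2}+1}{2}$. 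Therefore $\sup_{\mathcal{E}_{2}}R_{\text{Holevo}}=\tfrac{\sqrt{2}+1}{2}<2$.

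The cubic case uses the same template, with one twist: $R_{\text{cubic}}(x,\cdot)$ need not be monotone, but the analogous computation reduces the sign of $\partial_{s}R_{\text{cubic}}$ to that of an upward quadratic in $s$ which is negative at $s=0$, so $R_{\text{cubic}}(x,\cdot)$ is first nonincreasing then nondecreasing on $[0,1]$ and attains its maximum over $s$ at an endpoint. The endpoint $s\to1^{-}$ contributes $\sup_{x}(1-2x)/\bigl((1-3x)+2x^{3/2}\bigr)\approx1.05$, and the dominant endpoint $s=0$ gives, after the substitution $u:=\left\vert p_{1}-p_{2}\right\vert =\sqrt{1-4x}\in[0,1)$, the function $h(u)=(1+u)(1+u^{2})/(1+3u^{2})$; its critical points solve $3u^{4}-4u+1=(u-1)(3u^{3}+3u^{2}+3u-1)=0$, and the unique root $u^{\ast}\in(0,1)$ of the cubic yields $h(u^{\ast})\approx1.118$. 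This gives $\sup_{\mathcal{E}_{2}}R_{\text{cubic}}\approx1.118<\tfrac{\sqrt{2}+1}{2}$, establishing $\left(\ref{supremum error for two dimensions}\right)$ and the cubic bound $\left(\ref{numeric value for cubic failure on two-state ensembles}\right)$. The main effort --- and the part I expect to be fiddly --- is the monotonicity/endpoint analysis of these ratios (especially the cubic one, which genuinely is not monotone in the overlap $s$), together with correctly bookkeeping the degenerate configuration $\cos^{2}\theta=1$ of coincident states, which is precisely where the Holevo and cubic suprema are attained.
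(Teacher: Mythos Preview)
Your approach is the same as the paper's---reduce everything to the explicit two-state formulas of Theorem~\ref{Theorem success prob of weighted 2-d measurements} and then do calculus---but you actually carry out the analysis that the paper omits.  For the inequality $P_{\text{fail}}^{\text{PGM}}\ge P_{\text{fail}}^{\text{Holevo}}$ the paper arrives at the same comparison $2\bigl((1-2x)+2xs\bigr)\ge 1+2\sqrt{x}\,s$ (written in the $p_{k}$, $\theta$ variables) and proves it by summing two elementary estimates rather than factoring; your factorization $(1-2\sqrt{x})(1+2\sqrt{x}(1-s))\ge 0$ is cleaner and makes the equality case immediate.  For $\sup R_{\text{PGM}}=2$ the paper invokes the Barnum--Knill bound for the upper estimate and $p_{1}\to 0$ for the lower; your direct argument from the formula is self-contained.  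For the Holevo supremum the paper simply calls it ``an unilluminating exercise in calculus'' and quotes the extremizer $\psi_{1}=\psi_{2}$, $p_{1}=\sqrt{2}/2$; your monotonicity argument in $s$ via the quadratic $Q(s)$ supplies exactly that exercise, and your one-variable optimization at $s=0$ recovers the paper's extremizer.  The cubic bound is just stated numerically in the paper, so your endpoint analysis is an improvement there too.

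One point to tighten in the cubic case: the sign of $\partial_{s}R_{\text{cubic}}$ is literally that of $L(s)-\sqrt{x}\,q(s)$ with $L$ linear, not directly that of a quadratic.  The quadratic $L(s)^{2}-xq(s)^{2}$ enters only after squaring, which is valid on the set $\{L\ge 0\}$; on $\{L<0\}$ the sign is negative automatically.  Since $L(s_{0})=0$ forces the quadratic to be negative there, its unique positive root lies in $(s_{0},1)$, and the ``first nonincreasing, then nondecreasing'' conclusion follows.  This is the only place your sketch elides a step, and it is easily repaired.
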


\begin{proof}
To prove $\left(  \ref{eq PGM worse HOM for 2 pure states}\right)  $, note
that since $\sqrt{p_{1}p_{2}}\leq\frac{1}{2},$ we have the inequalities%
\begin{align*}
\frac{1}{2}+\sqrt{p_{1}p_{2}}  &  \leq1\\
\sqrt{p_{1}p_{2}}\left(  \left\vert \sin\theta\right\vert -1\right)   &
\leq2p_{1}p_{2}\left(  \left\vert \sin\theta\right\vert -1\right)  .
\end{align*}
Summing gives
\[
\frac{1}{2}\left(  1+2\sqrt{p_{1}p_{2}}\left\vert \sin\theta\right\vert
\right)  \leq1+2p_{1}p_{2}\left(  \left\vert \sin\theta\right\vert -1\right)
=p_{1}^{2}+p_{2}^{2}+2p_{1}p_{2}\sin\left\vert \theta\right\vert .
\]
Equation $\left(  \ref{eq PGM worse HOM for 2 pure states}\right)  $ follows
by dividing $p_{1}p_{2}\cos^{2}\theta$ by both sides and applying $\left(
\ref{formula for W-weighted failure rate}\right)  $.

The equation on the left-hand side of $\left(
\ref{supremum error for two dimensions}\right)  $ shows that the bound
$\left(  \ref{Barnum Knill Mixed Bound}\right)  $ of Barnum and Knill is
sharp. To see that%
\[
\sup_{\mathcal{E}_{2}}\frac{P_{\text{fail}}^{\text{PGM}}}{P_{\text{fail}%
}^{\text{optimal}}}\geq2,
\]
take $p_{1}\rightarrow0^{+}$ for any fixed $\left\langle \psi_{1},\psi
_{2}\right\rangle \neq0$. The equation on the RHS of $\left(
\ref{supremum error for two dimensions}\right)  $ is an unilluminating
exercise in calculus. The maximizing ensemble is given by $\psi_{1}=\psi_{2}$
and $p_{1}=\sqrt{2}/2$. The last inequality $\left(
\ref{numeric value for cubic failure on two-state ensembles}\right)  $ was
computed numerically.
\end{proof}

The relative success rates of $P_{\text{fail}}/P_{\text{fail}}^{\text{optimal}%
}$ for the weightings $r=1,2,$ and $3$ of measurements on the ensemble
$\mathcal{E}_{2}$ with $\left\vert \left\langle \psi_{1},\psi_{2}\right\rangle
\right\vert =\cos\theta$ and $p_{1}=1-p_{2}=p$ are plotted in Figures 1 and 2a/b.%

\begin{figure}
[ptb]
\begin{center}
\includegraphics[
trim=0.000000in 0.399346in 0.000000in 0.369765in,
height=2.6195in,
width=3.3892in
]%
{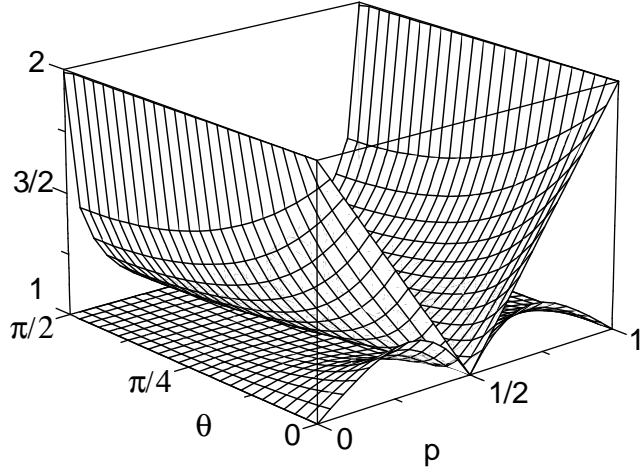}%
\caption{$%
\protect\begin{tabular}
[t]{l}%
$P_{\text{fail}}/P_{\text{fail}}^{\text{opt}}$ for the PGM (upper) and
Holevo's \protect\\
measurement (lower) for binary ensembles\protect\\
with $\left\vert \left\langle \psi_{1},\psi_{2}\right\rangle \right\vert
=\cos\theta$ and $p_{1}=1-p_{2}=p$%
\protect\end{tabular}
\ $}%
\end{center}
\end{figure}
\[%
\begin{tabular}
[c]{ll}%
{\parbox[b]{2.885in}{\begin{center}
\includegraphics[
trim=0.000000in 0.401699in 0.000000in 0.366067in,
height=2.2329in,
width=2.885in
]%
{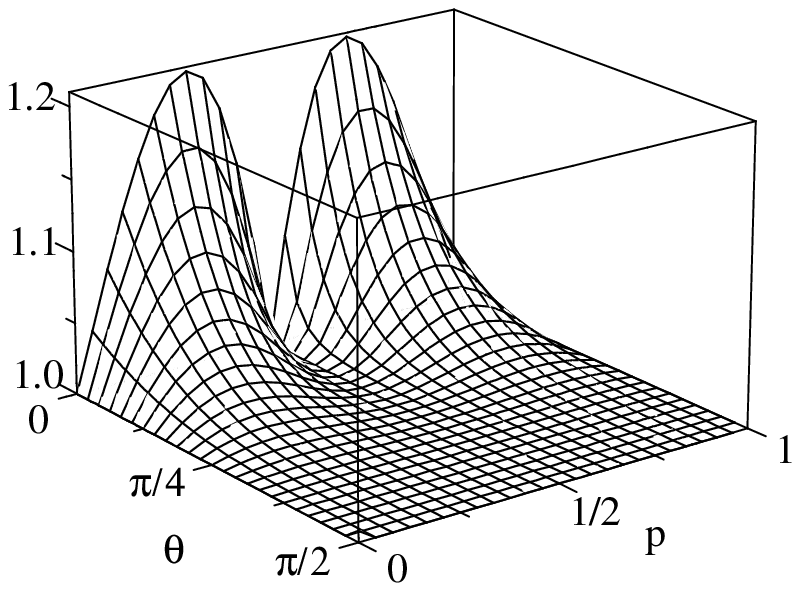}%
\\
Fig 2(a): $P_{\text{fail}}^{\text{Holevo}}/P_{\text{fail}}^{\text{opt}}$ for
binary ensembles
\end{center}}}%
&
{\parbox[b]{2.8971in}{\begin{center}
\includegraphics[
trim=0.000000in 0.401699in 0.000000in 0.366067in,
height=2.2312in,
width=2.8971in
]%
{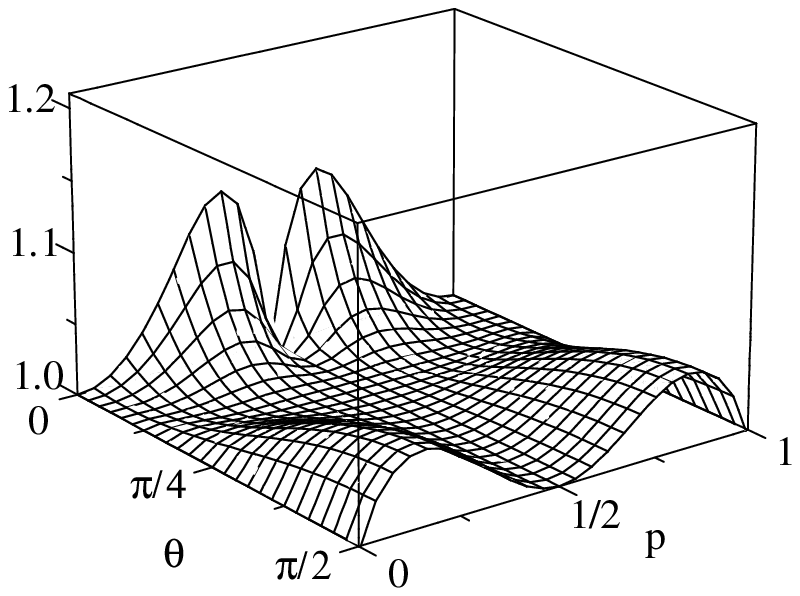}%
\\
Fig 2(b): $P_{\text{fail}}^{\text{cubic}}/P_{\text{fail}}^{\text{opt}}$ for
binary ensembles
\end{center}}}%
\end{tabular}
\ \ \ \
\]

\subsection{Asymptotic optimality\label{asymptotic optimality section}}

Holevo's quadratic weighting is uniquely characterized by the following
converse of Theorem \ref{Theorem Holevo's Assymtotic optimality theorem}:

\begin{theorem}
[Converse to Holevo's asymptotic optimality Theorem]Fix probabilities
$p_{k}>0$ and weights $W_{k}\geq0$, $k=1,...,m$. Then the Belavkin weighted
square root measurement $\left(
\ref{eq formula for pure state weighted measurement M_k}\right)  $ is
asymptotically optimal for distinguishing ensembles $\mathcal{E}_{m}=\left\{
\left(  \psi_{k},p_{k}\right)  \right\}  _{k=1,...,m}$ only if $W_{k}%
=$const$\times p_{k}^{2}$.
\end{theorem}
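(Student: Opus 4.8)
The plan is to prove the contrapositive: if $W_k/p_k^2$ is not independent of $k$, then the Belavkin measurement $\left(\ref{eq formula for pure state weighted measurement M_k}\right)$ is not asymptotically optimal. So I would choose indices $i\ne j$ with $W_i/p_i^2\ne W_j/p_j^2$, fix an orthonormal set $e_1,\dots,e_m$, and for $\theta\in(0,\pi/2)$ consider the ensemble $\mathcal{E}_m(\theta)$ with $\psi_k=e_k$ for $k\notin\{i,j\}$, $\psi_i=e_i$, and $\psi_j=\cos\theta\,e_i+\sin\theta\,e_j$; as $\theta\to\pi/2^-$ this degenerates to the orthonormal basis $\{e_k\}$, so it suffices to show the failure-rate ratio stays bounded away from $1$ along this family. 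The first step is to reduce to two states: the operator $\sum_\ell W_\ell\left\vert\psi_\ell\right\rangle\left\langle\psi_\ell\right\vert$ is block diagonal with respect to $\operatorname{Span}(e_i,e_j)\oplus\bigoplus_{k\notin\{i,j\}}\mathbb{C}e_k$, hence so is the weighted measurement; and since every $\psi_k$ lies in a single block, pinching an arbitrary POVM along these block projections preserves both the POVM property and all the probabilities $\operatorname{Tr}(\rho_k M_k)$, so an optimal measurement may likewise be taken block diagonal. (Alternatively one can invoke Belavkin's characterization, Theorem \ref{theorem belavkin pure state optimal weights}.) The one-dimensional blocks are distinguished perfectly and contribute nothing to the failure rate, so for every $\theta$ the quantities $P_{\mathrm{fail}}^{\mathrm{opt}}(\mathcal{E}_m(\theta))$ and $P_{\mathrm{fail}}^{W\text{-weighted}}(\mathcal{E}_m(\theta))$ coincide with the failure rates of the binary sub-ensemble carried by $\operatorname{Span}(e_i,e_j)$, with weights $W_i,W_j$ and (unnormalized) priors $p_i,p_j$.

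Before continuing I would dispose of vanishing weights. If $W_i=0$ for some $i$, then $M_i=0$ in $\left(\ref{eq formula for pure state weighted measurement M_k}\right)$, so $\psi_i$ is never correctly identified and $P_{\mathrm{fail}}^{W\text{-weighted}}\ge p_i>0$ for every ensemble, whereas $P_{\mathrm{fail}}^{\mathrm{opt}}\to0$ along the family above, so the ratio diverges and the measurement is not asymptotically optimal. Hence we may assume all $W_k>0$, and then the chosen pair additionally satisfies $p_i\sqrt{W_j}\ne p_j\sqrt{W_i}$.

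With the reduction established, the remaining work is to substitute into Theorem \ref{Theorem success prob of weighted 2-d measurements}. Writing $c=\cos\theta$ and $P=p_i+p_j$, formula $\left(\ref{formula for optimal 2 pure state failure rate}\right)$ applied to the sub-ensemble with priors $p_i/P,p_j/P$ and rescaled by $P$ gives $P_{\mathrm{fail}}^{\mathrm{opt}}(\mathcal{E}_m(\theta))=P\big(\tfrac12-\sqrt{\tfrac14-p_ip_jc^2/P^2}\,\big)=\tfrac{p_ip_j}{P}c^2+O(c^4)$, while $\left(\ref{formula for W-weighted failure rate}\right)$ gives $P_{\mathrm{fail}}^{W\text{-weighted}}(\mathcal{E}_m(\theta))=\dfrac{(p_iW_j+p_jW_i)\,c^2}{W_i+W_j+2\sqrt{W_iW_j}\,\left\vert\sin\theta\right\vert}=\dfrac{p_iW_j+p_jW_i}{(\sqrt{W_i}+\sqrt{W_j})^2}\,c^2+o(c^2)$ as $\theta\to\pi/2^-$. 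Dividing and using the identity $(p_i+p_j)(p_iW_j+p_jW_i)-p_ip_j(\sqrt{W_i}+\sqrt{W_j})^2=(p_i\sqrt{W_j}-p_j\sqrt{W_i})^2$ yields
\[
\lim_{\theta\to\pi/2^-}\frac{P_{\mathrm{fail}}^{W\text{-weighted}}(\mathcal{E}_m(\theta))}{P_{\mathrm{fail}}^{\mathrm{opt}}(\mathcal{E}_m(\theta))}
=\frac{(p_i+p_j)(p_iW_j+p_jW_i)}{p_ip_j(\sqrt{W_i}+\sqrt{W_j})^2}
=1+\frac{(p_i\sqrt{W_j}-p_j\sqrt{W_i})^2}{p_ip_j(\sqrt{W_i}+\sqrt{W_j})^2}>1 ,
\]
the strict inequality because $p_i\sqrt{W_j}\ne p_j\sqrt{W_i}$. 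Thus the Belavkin measurement is not asymptotically optimal, which is the contrapositive of the claim.

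The one step I expect to require care is the block-decomposition argument — that for a direct sum of sub-ensembles an optimal measurement may be taken block diagonal, so that the optimal failure rate splits over the blocks and the one-dimensional blocks drop out. I would settle this by the pinching observation above (each $\rho_k$ is supported in a single block, so replacing $M_k$ by $\sum_\alpha\Pi_\alpha M_k\Pi_\alpha$ keeps $\sum_k M_k$ equal to the identity and leaves every $\operatorname{Tr}(\rho_k M_k)$ unchanged), with Theorem \ref{theorem belavkin pure state optimal weights} as a backup. Everything after that is the routine asymptotics $\tfrac12-\sqrt{\tfrac14-x}=x+O(x^2)$ together with $\left\vert\sin\theta\right\vert\to1$, fed into the closed-form error rates already proved in Theorem \ref{Theorem success prob of weighted 2-d measurements}.
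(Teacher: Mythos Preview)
Your proof is correct and follows essentially the same route as the paper: reduce to a two-state sub-ensemble by taking the remaining $m-2$ states orthogonal to everything, plug into the explicit formulas of Theorem~\ref{Theorem success prob of weighted 2-d measurements} to compute the limiting ratio, and show it exceeds $1$ unless $W_k\propto p_k^2$. You are more careful than the paper about the block-reduction step and the vanishing-weight case, and you finish with an explicit algebraic identity where the paper invokes strict convexity of $x\mapsto x^2$, but the skeleton is identical.
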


\begin{proof}
Setting $c_{k}=W_{k}/p_{k}^{2}$, we must show that $c_{k}=c_{k^{\prime}}$ for
all $k,k^{\prime}$ under the assumption that $\left\{  W_{k}\right\}  $
defines an asymptotically optimal measurement. It is sufficient to consider
the case $m=2$.\footnote{The case $m>2$ is reduced to $m=2$ by considering
ensembles for which each of a subset $m-2$ states is orthogonal to all of the
other states in $\mathcal{E}_{m}$.} By Theorem
$\ref{Theorem success prob of weighted 2-d measurements}$ and L'Hospital's
rule%
\begin{align*}
\lim_{\theta\rightarrow\pi/2}\frac{P_{\text{fail}}^{\text{weighted}}%
}{P_{\text{fail}}^{\text{optimal}}}  &  =\frac{p_{1}W_{2}+p_{2}W_{1}}%
{W_{1}+W_{2}+2\sqrt{W_{1}W_{2}}}\times\lim_{\theta\rightarrow\pi/2}\frac
{\cos^{2}\theta}{\frac{1}{2}-\sqrt{\frac{1}{4}-p_{1}p_{2}\cos^{2}\theta}}\\
&  =\frac{\left(  c_{2}p_{2}+c_{1}p_{1}\right)  p_{1}p_{2}}{\left(
\sqrt{W_{1}}+\sqrt{W_{2}}\right)  ^{2}}\times\frac{1}{p_{1}p_{2}}\\
&  =\frac{c_{1}p_{1}+c_{2}p_{2}}{\left(  \sqrt{c_{1}}p_{1}+\sqrt{c_{2}}%
p_{2}\right)  ^{2}}\text{.}%
\end{align*}
The conclusion follows from the strict convexity of $x\mapsto x^{2}$.
\end{proof}

\subsection{Reflections \& Counter-examples for three
states\label{section counterexamples for three states}}

We now reflect on the relationships between Belavkin's optimal weighting
($W_{k}=p_{k}^{2}\left\langle \psi_{k}\right\vert M_{k}^{\text{opt}}\left\vert
\psi_{k}\right\rangle $), the weighting for the PGM ($W_{k}=p_{k}$), Holevo's
weighting ($W_{k}=p_{k}^{2}$), and the weighting of Ballester, Wehner, and
Winter ($W_{k}=p_{k}^{3}$). Note that while Holevo's measurement relatively
over-weights vectors $\psi_{k}$ for which $\left\langle \psi_{k}\right\vert
M_{k}^{\text{opt}}\left\vert \psi_{k}\right\rangle $ is relatively small, the
PGM\ additionally over-weights vectors for which $p_{k}$ is small! In general,
one therefore expects that \textit{the relative misweightings of the PGM\ tend
to compound one another, so that BWSRM-2 is better that BWSRM-1}. Similarly,
by approximate cancellation of misweightings, one expects that the cubic
weighting will sometimes outperform the quadratic weighting for ensembles far
from the asymptotically orthogonal regime considered by Holevo.

We have seen that Holevo's measurement is always as least as good as the
PGM\ for two-state ensembles. For three states the above intuitive argument
does not always hold true, as shown by the following pathology:

\begin{theorem}
\label{Theorem bad pure state example}There exists a 3-state ensemble with the
properties that:

\begin{enumerate}
\item \label{property a priori most probable never detected}There is an
optimal measurement such that the a priori strictly-most-probable state is
NEVER detected.

\item \label{holevo worse than PGM}Holevo's measurement is worse than the PGM:
$P_{\text{fail}}^{\text{Holevo}}>P_{\text{fail}}^{\text{PGM}}$.
\end{enumerate}
\end{theorem}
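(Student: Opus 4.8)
The plan is to exhibit one explicit ensemble of three pure states in $\mathbb{C}^2$ carrying both pathologies at once. Place $\psi_2$ and $\psi_3$ symmetrically about a line, making an angle $2\phi$ with each other for some $0<\phi<\pi/4$, and let $\psi_1$ be the unit vector along the bisecting line, so that $\langle\psi_1,\psi_2\rangle=\langle\psi_1,\psi_3\rangle=\cos\phi$ and $\langle\psi_2,\psi_3\rangle=\cos2\phi$. Take $p_2=p_3=q$ and $p_1=1-2q$, with $q$ in the window $\tfrac{1}{2+\cos^2\phi+\cos\phi\sin\phi}\le q<\tfrac13$; this window is non-empty precisely because $\phi<\pi/4$, and $q<\tfrac13$ forces $p_1=1-2q>q$, so $\psi_1$ is the strictly most probable state.

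For item 1, I would take as the candidate optimal measurement $M_1=0$ together with the von Neumann measurement $\{M_2,M_3\}$ that Helstrom-optimally distinguishes the binary sub-ensemble $\{(\psi_2,q),(\psi_3,q)\}$. This is a BWSRM with $W_1=0$ and with $W_2=W_3$ equal to the binary-optimal weights, so by Theorem \ref{theorem belavkin pure state optimal weights} it is optimal for the full ensemble iff $\Lambda=\big(\sum_\ell W_\ell|\psi_\ell\rangle\langle\psi_\ell|\big)^{1/2}$ is invertible on $\operatorname*{Span}(\mathcal{E}_3)$ and $p_1\langle\psi_1|\Lambda^{-1}|\psi_1\rangle\le 1$ (the equalities required for $k=2,3$ are exactly the binary optimality conditions, which hold by construction). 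Since $\psi_1\in\operatorname*{Span}(\psi_2,\psi_3)=\operatorname*{Span}(\mathcal{E}_3)$, on which $\Lambda$ is invertible, only the scalar inequality remains; the reflection symmetry makes $\Lambda$ diagonal in the natural basis, and a short computation shows $p_1\langle\psi_1|\Lambda^{-1}|\psi_1\rangle\le 1$ is equivalent to $q\ge \tfrac{1}{2+\cos^2\phi+\cos\phi\sin\phi}$, which holds by the choice of $q$. Thus there is an optimal measurement that never produces outcome $1$, which is item 1.

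For item 2, I would evaluate $P_{\mathrm{fail}}^{\mathrm{Holevo}}$ (weights $W_k=p_k^2$) and $P_{\mathrm{fail}}^{\mathrm{PGM}}$ (weights $W_k=p_k$) from the closed form $(\ref{expression to show WLSM ER is cont})$. The $3\times3$ matrix $A^\dagger A$ inherits the $2\leftrightarrow3$ reflection symmetry, so it block-diagonalizes into a $1\times1$ block on the antisymmetric combination $(|2\rangle-|3\rangle)/\sqrt2$ and a $2\times2$ block on $\operatorname*{span}\big(|1\rangle,(|2\rangle+|3\rangle)/\sqrt2\big)$; since $A:\mathbb{C}^3\to\mathbb{C}^2$ has rank $2$ and the eigenvalue on the antisymmetric combination is positive, that $2\times2$ block is rank one, making $(A^\dagger A)^{1/2}$ and all of the diagonal entries $\langle k|(A^\dagger A)^{1/2}|k\rangle$ elementary functions of $q$ and $\phi$. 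Substituting yields explicit formulas for the two success rates whose difference, after cancellation of a common term, reduces to an elementary inequality in $q$ and $\phi$; one checks it has the desired sign on part of the allowed window --- for instance $\phi=\pi/6$, $q=0.32$ (i.e.\ $p_1=0.36$, $p_2=p_3=0.32$) gives $P_{\mathrm{succ}}^{\mathrm{PGM}}\approx0.5809>P_{\mathrm{succ}}^{\mathrm{Holevo}}\approx0.5790$, i.e.\ $P_{\mathrm{fail}}^{\mathrm{Holevo}}>P_{\mathrm{fail}}^{\mathrm{PGM}}$. Conceptually this is natural: by $(\ref{equation for optimal weighting of a measurement})$ the optimal weight of $\psi_1$ here is $W_1^{\mathrm{opt}}=p_1^2\langle\psi_1|M_1^{\mathrm{opt}}|\psi_1\rangle=0$, while the quadratic weighting over-weights $\psi_1$ relative to $\psi_2,\psi_3$ by the square of the factor by which the PGM does, so it is the more badly miscalibrated of the two.

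The only real obstacle is the bookkeeping in item 2: the comparison of $P_{\mathrm{fail}}^{\mathrm{Holevo}}$ with $P_{\mathrm{fail}}^{\mathrm{PGM}}$ does not collapse to any slick one-line inequality, so I would either grind through the (entirely routine) algebra in $q$ and $\phi$, or --- cleaner --- simply record one numerical instance and verify that it also lies in the item-1 window, so that a single ensemble exhibits both phenomena. Everything else is elementary two-dimensional linear algebra: checking the Belavkin scalar condition and block-diagonalizing $A^\dagger A$.
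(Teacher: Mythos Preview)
Your construction is correct and is essentially the paper's own example up to relabelling (your $\psi_1,\psi_2,\psi_3$ are the paper's $\psi_3,\psi_1,\psi_2$, with the paper taking the boundary value $q=(2+\cos^2\phi+\cos\phi\sin\phi)^{-1}$ at $\phi=\pi/6$). The only differences are that the paper certifies optimality of the $M_1=0$ measurement via the Holevo--Yuen--Kennedy--Lax Lagrange-operator condition $(L+L^\dagger)/2\ge p_k|\psi_k\rangle\langle\psi_k|$ rather than via Belavkin's Theorem~\ref{theorem belavkin pure state optimal weights} as you do, and for item~2 simply reports the numerics $P_{\text{fail}}^{\text{Holevo}}\approx0.4245>P_{\text{fail}}^{\text{PGM}}\approx0.4224>P_{\text{fail}}^{\text{optimal}}\approx0.4138$ without spelling out the symmetry block-diagonalization.
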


\begin{proof}
Define the $3$-state ensemble by $\psi_{1}=\left(  \cos\theta,\sin
\theta\right)  $, $\psi_{2}=\left(  \cos\theta,-\sin\theta\right)  $, and
$\psi_{3}=\left(  1,0\right)  ,$ where $\theta=\pi/6$ and%
\[
p_{1}=p_{2}=\left(  1-p_{3}\right)  /2=\left(  2+\left(  \cos\theta+\sin
\theta\right)  \cos\theta\right)  ^{-1}\approx.3142<1/3\text{.}%
\]
It is straightforward to check that the POVM%
\[%
\begin{array}
[c]{ccc}%
M_{1}=\frac{1}{2}%
\begin{bmatrix}
1 & 1\\
1 & 1
\end{bmatrix}
& M_{2}=\frac{1}{2}%
\begin{bmatrix}
1 & -1\\
-1 & 1
\end{bmatrix}
& M_{3}=0
\end{array}
\]
satisfies the necessary and sufficient optimality conditions \cite{Holevo
remarks on optimal measurements}
\[
\left(  L+L^{\dag}\right)  /2-p_{k}\left\vert \psi_{k}\right\rangle
\left\langle \psi_{k}\right\vert \geq0\text{ for all }k\text{,}%
\]
where the Lagrange operator $L$ is given by%
\[
L\equiv\sum_{k=1}^{3}p_{k}\,M_{k}\left\vert \psi_{k}\right\rangle \left\langle
\psi_{k}\right\vert =p_{1}\left(  \cos\left(  \theta\right)  +\sin\left(
\theta\right)  \right)
\begin{bmatrix}
\cos\theta & \\
& \sin\theta
\end{bmatrix}
\text{.}%
\]
(Here $A\geq0$ means $A$ is positive semidefinite.) Property
$\ref{holevo worse than PGM}$ follows by direct computation:%
\[
P_{\text{fail}}^{\text{Holevo}}\approx.4245>\text{\ }P_{\text{fail}%
}^{\text{PGM}}\approx.4224>\ P_{\text{fail}}^{\text{optimal}}\approx.4138.
\]

\end{proof}

\textbf{Remark:} The linear-dependence of the states in the above construction
was not essential: one can simply embed the above example in $3$-space, and
perturb the vectors $\psi_{k}$ slightly to make them linearly independent. By
Theorem $\ref{Theorem continuity of WSRM}$, property
$\ref{holevo worse than PGM}$ will be unaffected by small perturbations.

Theorem \ref{Theorem bad pure state example} aside, given any fixed set of
non-equal priors $p_{k}$, we conjecture that Holevo's weighting will have a
better success rate than the PGM on average for randomly chosen ensembles
$\mathcal{E}_{m}$, with the corresponding $\left\{  \psi_{k}\right\}  $
independently chosen according to Haar measure.

\subsection{Sufficient optimality conditions for weighted
measurements\label{section optimality conditions for weighted measurements}}

We close our comparisons of Belavkin weighted measurements by noting that in
the case $W_{k}>0$, Theorem $\ref{theorem belavkin square root of P}$
generalizes easily:

\begin{theorem}
[Optimality conditions for positively-weighted measurements]%
\label{theorem weighted measurement optimality conditions}A sufficient
condition for optimality of the BWSRM with strictly positive weights $W_{k}>0$
is that there exists a constant $c>0$ such that
\begin{equation}
p_{k}^{2}\left\langle \psi_{k}\right\vert M_{k}\left\vert \psi_{k}%
\right\rangle =cW_{k}\text{ for all }%
k\text{.\label{eq in terms of weighted meas for optimality cond}}%
\end{equation}

\end{theorem}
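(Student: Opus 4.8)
The statement to prove is Theorem~\ref{theorem weighted measurement optimality conditions}: if the weights satisfy $W_k>0$ and $p_k^2\langle\psi_k|M_k|\psi_k\rangle = cW_k$ for a single constant $c>0$, then the BWSRM with those weights is optimal. The natural route is to reduce the claim directly to Belavkin's necessary-and-sufficient criterion, Theorem~\ref{theorem belavkin pure state optimal weights}. That criterion says a POVM on $\operatorname{Span}(\mathcal E_m)$ is optimal iff it is \emph{some} BWSRM with weights $W'_k$ whose operator $\Lambda' = \left(\sum_\ell W'_\ell|\psi_\ell\rangle\langle\psi_\ell|\right)^{1/2}$ is invertible on the span and $p_k\langle\psi_k|(\Lambda')^{-1}|\psi_k\rangle\le 1$ with equality when $W'_k>0$. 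So the task is to exhibit optimal weights $W'_k$ and verify those two conditions. The obvious candidate is $W'_k = W_k$ itself (possibly rescaled), which is exactly the measurement in the hypothesis.

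First I would record the operator-level consequences of the hypothesis. With $A$ as in equation~\eqref{equation convert weightings to operator} (so that $A^\dag A$ and $\sum_\ell W_\ell|\psi_\ell\rangle\langle\psi_\ell| = AA^\dag$ are the relevant Gram-type operators), one computes $\langle\psi_k|M_k|\psi_k\rangle$ for the BWSRM directly from the definition~\eqref{eq formula for pure state weighted measurement M_k}: since $W_k>0$, the operator $\Lambda = (AA^\dag)^{1/2}$ is invertible on $\operatorname{Span}(\mathcal E_m)$ (each $\psi_k$ lies in the span of the $W_\ell\psi_\ell$), and
\[
\langle\psi_k|M_k|\psi_k\rangle = W_k\,\langle\psi_k|\Lambda^{-1}|\psi_k\rangle^2 .
\]
Substituting this into the hypothesis $p_k^2\langle\psi_k|M_k|\psi_k\rangle = cW_k$ and cancelling the common positive factor $W_k$ gives $p_k^2\langle\psi_k|\Lambda^{-1}|\psi_k\rangle^2 = c$, i.e.
\[
p_k\,\langle\psi_k|\Lambda^{-1}|\psi_k\rangle = \sqrt{c}\quad\text{for all }k,
\]
where the sign is fixed because $\Lambda^{-1}\ge 0$ and $p_k>0$.

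Next I would eliminate the constant. The rescaled weights $\widetilde W_k = W_k/c$ give $\widetilde\Lambda = \Lambda/\sqrt{c}$ and leave the POVM operators $M_k$ unchanged (the scalar cancels between the $\Lambda^{-1/2}$ factors and the $W_k$), so $p_k\langle\psi_k|\widetilde\Lambda^{-1}|\psi_k\rangle = 1$ for every $k$. Thus the BWSRM in the hypothesis is literally a BWSRM with weights $\widetilde W_k$ satisfying Belavkin's conditions~\eqref{In Belavkins theorem this better be invertible}--\eqref{Belavkin pure state optimality condition}: $\widetilde\Lambda$ is invertible on the span, $p_k\langle\psi_k|\widetilde\Lambda^{-1}|\psi_k\rangle \le 1$ holds (in fact with equality), and since all $\widetilde W_k>0$ the ``equality when $W_k>0$'' clause is exactly what we have. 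By Theorem~\ref{theorem belavkin pure state optimal weights} the measurement is optimal.

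The only genuine technical point — and the step I'd expect to need the most care — is the identity $\langle\psi_k|M_k|\psi_k\rangle = W_k\langle\psi_k|\Lambda^{-1}|\psi_k\rangle^2$, since $M_k = \Lambda^{-1}W_k|\psi_k\rangle\langle\psi_k|\Lambda^{-1}$ and one must be sure $\Lambda^{-1}$ acts correctly (in particular that $\psi_k$ is genuinely in the range where the generalized inverse behaves as a true inverse, which is guaranteed by $W_k>0$ for all $k$, so $\operatorname{Span}(\mathcal E_m) = \operatorname{Span}\{W_k\psi_k\}$). Everything else is scalar bookkeeping. One might also remark, as the lead-in sentence suggests, that this recovers Theorem~\ref{theorem belavkin square root of P}: taking $W_k = p_k$ turns~\eqref{eq in terms of weighted meas for optimality cond} into $p_k\langle\psi_k|M_k|\psi_k\rangle = c$, which is condition~\eqref{condition for optimality of PGM in intuitive form}.
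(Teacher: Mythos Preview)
Your proposal is correct and follows essentially the same route as the paper: you unpack $\langle\psi_k|M_k|\psi_k\rangle = W_k\langle\psi_k|\Lambda^{-1}|\psi_k\rangle^2$, cancel $W_k$, take square roots, rescale the weights to absorb the constant, and invoke Theorem~\ref{theorem belavkin pure state optimal weights}. The only cosmetic difference is that the paper rescales by $c$ (so $\Lambda'=\sqrt{c}\,\Lambda$) whereas you rescale by $1/c$; both yield the same conclusion since BWSRMs are invariant under positive scalar rescaling of the weights.
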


\begin{proof}
Dividing both sides of $\left(
\ref{eq in terms of weighted meas for optimality cond}\right)  $ by $cW_{k}$
and taking the square root gives
\[
c^{-1/2}p_{k}\left\langle \psi_{k}\right\vert \left(
{\displaystyle\sum}
W_{\ell}\left\vert \psi_{\ell}\right\rangle \left\langle \psi_{\ell
}\right\vert \right)  ^{-1/2}\left\vert \psi_{k}\right\rangle =1.
\]
In particular, the rescaled weights $c\times W_{\ell}$ satisfy Belavkin's
optimality condition $\left(  \ref{Belavkin pure state optimality condition}%
\right)  $. The result follows, since BWSRMs are unaffected by such rescalings.
\end{proof}

The assumption that $W_{k}>0$ for all $k$ is necessary:\ otherwise the weights
$W_{k}=\delta_{k1}$ would be optimal for any ensemble. Note that for the
asymptotically optimal weight $W_{k}=p_{k}^{2}$, equation $\left(
\ref{eq in terms of weighted meas for optimality cond}\right)  $ becomes the
particularly simple condition
\begin{equation}
\left\langle \psi_{k}\right\vert M_{k}\left\vert \psi_{k}\right\rangle
=\text{const.}%
\end{equation}

\section{Future directions}

In the sequel, we focus on the quadratically weighted mixed state measurement,
and consider resulting two-sided bounds for the distinguishability arbitrary
ensembles of mixed quantum states \cite{sequel}. We will generalize to the
case of approximate reversals of quantum channels at a later date \cite{In
preparation}.\bigskip

\noindent\textbf{Acknowledgements:} I would like to thank Andrew Childs, Aram
Harrow, Julio Concha, V. P. Belavkin, and Vincent Poor for pointing out useful
references, Julio Concha and Andrew Kebo for providing copies of their theses,
William Wootters for providing a copy of Hausladen's thesis, and Arthur Jaffe
for his encouragement.

\end{document}